\newcommand{\commentOut}[1]{}
\newtheorem{theorem}{Theorem}
\newtheorem{lemma}{Lemma}
\newtheorem{problem}{Problem}
\begin{document}
%
\title{On string matching with k mismatches}
%
%
%
%
\author{Marius~Nicolae~and~Sanguthevar~Rajasekaran
\IEEEcompsocitemizethanks{\IEEEcompsocthanksitem M. Nicolae and S. Rajasekaran
are with the Department of Computer Science and Engineering, University
of Connecticut, Storrs, CT, 06269.\protect\\
E-mail: man09004@engr.uconn.edu and rajasek@engr.uconn.edu 
}
\thanks{}}

%
%

\markboth{Nicolae and Rajasekaran: On string matching with k mismatches}{}
%


\IEEEcompsoctitleabstractindextext{%
\begin{abstract}

In this
paper we consider several variants of the pattern matching problem. In
particular, we investigate the following problems: 1) Pattern matching with $k$
mismatches; 2) Approximate counting of mismatches; and 3) Pattern matching with
mismatches. The distance metric used is the Hamming distance. 
We present some novel algorithms and techniques for solving these
problems. Both deterministic and randomized algorithms are offered. Variants of
these problems where there could be wild cards in either the text or the
pattern or both are considered. An experimental evaluation of these algorithms
is also presented. The source code is available at
http://www.engr.uconn.edu/$\sim$man09004/kmis.zip.

\end{abstract}

\begin{keywords}
pattern matching with mismatches, k mismatches problem, approximate string
matching
\end{keywords}}

\maketitle

\IEEEdisplaynotcompsoctitleabstractindextext

%
\IEEEpeerreviewmaketitle

\section{Introduction}
%
%

%
%
%
%
\IEEEPARstart{T}{he} problem of string matching has been studied extensively due to its wide
range of applications from Internet searches to computational biology. 
The simplest version takes as input a text
$T=t_1 t_2\cdots t_n$ and a pattern $P=p_1p_2\cdots p_m$ from an alphabet
$\Sigma$. The problem is to find all the occurrences of the pattern in the text.
Algorithms for solving this problem in $O(n+m)$ time are
well known (e.g.,
\cite{KMP77}). A variation of this problem searches for multiple patterns at
the same time (e.g. \cite{AC75}). A more general version allows for ``don't
care'' or ``wild card'' characters (they match any character) in the text and the
pattern. A simple $O(n \log |\Sigma| \log m)$ algorithm for pattern matching
with wild cards is given in \cite{FP74}. 
\commentOut{Each character in $\Sigma$ is mapped to a binary code of length
$\log |\Sigma|$ then convolutions are used to find all the instances of the
pattern in the text.}
A randomized $O(n \log n)$
algorithm which solves the problem with high probability is given in
\cite{IND98}. A slightly faster randomized $O(n \log m)$ algorithm is given in
\cite{Kal02}. A simple deterministic $O(n \log m)$ algorithm based on
convolutions is given in \cite{CC07}.

A more challenging instance of the problem is pattern matching with mismatches.
There are two versions: a) for every alignment of the pattern in the text, find the
distance between the pattern and the text, or b) identify only those alignments
where the distance between the pattern and the text is less than a given
threshold. The distance metric can be the Hamming distance, edit
distance, L1 metric, and so on. In \cite{CH99} the problem has been
generalized to use trees instead of sequences or to use sets of characters instead of single
characters. The Hamming distance between two strings of equal 
length is defined as the  number of positions where the two strings differ. In this paper 
we are interested in the following two problems, with and without wild cards.

{\bf 1. Pattern matching with mismatches:} Given a text $T=t_1t_2\ldots
t_n$, and a pattern $P=p_1p_2\ldots p_m$, output, for every $i,~1\leq i\leq (n-m+1)$, 
the Hamming distance between $t_i t_{i+1}\ldots t_{i+m-1}$ and $p_1p_2\ldots p_m$.

{\bf 2. Pattern matching with $k$ mismatches (or the $k$-mismatches problem):}
Take an additional input parameter $k$. Output all $i,~1\leq i\leq (n-m+1)$
for which the Hamming distance between $t_i t_{i+1},\ldots t_{i+m-1}$
and $p_1p_2\ldots p_m$ is less or equal to $k$.

\subsection{Pattern matching with mismatches}
For pattern matching with mismatches, the naive algorithm
computes the Hamming distance for every alignment of the pattern in the
text, in time $O(nm)$. A faster algorithm is Abrahamson's algorithm, which runs in time $O(n \sqrt{m \log m})$.
We prove that this algorithm can be modified to obtain an $O(n \sqrt{g \log m})$ time algorithm 
for pattern matching with mismatches and wild cards, where $g$ is the
number of non-wild card positions in the pattern. This gives a simpler and
faster alternative to an algorithm proposed in \cite{ALP04}.

In the literature, we also find algorithms which approximate the number of mismatches for every 
alignment. For example, \cite{ACD01} gives an $O(r n \log m)$ time
algorithm for pattern matching with mismatches, in the absence of wild cards, where
$r$ is the number of iterations of the
algorithm. Every distance reported has a variance bounded by $(m-c_i)/{r^2}$ where $c_i$ 
is the exact number of matches for alignment $i$.
\cite{KAR93} gives an $O(n \log^cm/\epsilon^2)$ time randomized algorithm which approximates the
Hamming distance for every alignment within an $\epsilon$ factor, in the absence of wild cards. 
We show how to extend this algorithm to pattern matching with mismatches {\em and} wild cards. 
The new algorithm
approximates the Hamming distance for every alignment within an $\epsilon$ factor in time
$O(n\log^2m/\epsilon^2)$ with high probability.

\subsection{Pattern matching with $k$ mismatches}
For the $k$-mismatches problem, without wild cards, $O(nk)$ time
algorithms are presented in \cite{LV85,GG86}. A faster $O(n\sqrt{k \log k})$ 
time algorithm is presented in
\cite{ALP04}. This algorithm combines the two main techniques known in the
literature for pattern matching with mismatches: filtering and convolutions. 
We give a significantly simpler algorithm having the
same worst case run time. Furthermore, the new algorithm will never perform more
operations than the one in \cite{ALP04} during marking and convolution. 
\commentOut{because
we use information about the frequency of the characters in the text to
minimize the time rather than make it fit within the worst case bound.}

An intermediate problem is to check if the
Hamming distance is less or equal to $k$ for a subset of the aligned positions.
This problem can be solved with the Kangaroo method proposed in \cite{ALP04} at a cost
of $O(k)$ time per alignment, using $O(n+m)$ additional memory. We show how to achieve the same
run time per alignment using only $O(m)$ additional memory.

Further, we look at the version of $k$-mismatches where wild cards are allowed
in the text and the pattern. For this problem, two randomized algorithms are
presented in \cite{CEP+07}. The first one runs in $O(nk\log n\log m)$  time and
the second one in $O\left (n\log m(k+\log n\log\log n)\right )$ time.
Both are Monte Carlo algorithms, i.e. they output the correct
answer with high probability. The same paper also gives a deterministic
algorithm with a run time of $O(nk^2\log^3m)$. Also, a deterministic $O(nk \log^2
m(\log^2 k + \log \log m))$ time algorithm is given in \cite{CEPR09}. We present a
Las Vegas algorithm (that always outputs the correct answer) which runs in time
$O(nk\log^2 m+n\log^2m\log n+n\log m\log n\log\log n)$ with high probability.

The contributions of this paper can be summarized as follows.

For pattern matching with mismatches:
\begin{itemize}
\item An $O(n \sqrt{g \log m})$ time algorithm  for pattern matching with mismatches
and wild cards, where $g$ is the number of non-wild card positions in the
pattern.

\item A randomized algorithm that approximates the Hamming distance for every
alignment, when wild cards are present, within an $\epsilon$ factor in time
$O(n\log^2m/\epsilon^2)$ with high probability.
\end{itemize}

For pattern matching with $k$ mismatches:
\begin{itemize}
\item An $O(n\sqrt{k \log k})$ time algorithm for pattern matching
with $k$ mismatches, without wild cards, which is simpler and has a better expected run
time than the one in \cite{ALP04}.

\item An algorithm that tests if the Hamming distance is less than $k$
for a subset of the alignments, without wild cards, at a cost of $O(k)$ time per
alignment, using only $O(m)$ additional memory.

\item A Las Vegas algorithm for the
$k$-mismatches problem with wild cards that runs in
time $O(nk\log^2 m+n\log^2m\log n+n\log m\log n\log\log n)$ with high probability.
\end{itemize}

The rest of the paper is organized as follows. First we introduce some notations
and definitions. Then we present the randomized and approximate algorithms:
first the Las Vegas algorithm for $k$-mismatches with wild cards, then the
algorithm for approximate counting of mismatches in the presence of wild
cards.
Then we describe the deterministic and exact algorithms, for which we also present an
empirical run time comparison.

 \section{Some definitions}
Given two strings $T=t_1t_2\ldots t_n$ and $P=p_1p_2\ldots
p_m$ (with $m\leq n$), the convolution of $T$ and $P$ is a sequence
$C=c_1,c_2,\ldots,c_{n-m+1}$ where $c_i=\sum_{j=1}^mt_{i+j-1}p_j$, for $1\leq
i\leq (n-m+1)$. We can compute this convolution in $O(n\log m)$ time using the
Fast Fourier Transform. Some speedup techniques exist \cite{FG09}
for convolutions applied on binary inputs, as is usually the case 
with pattern matching.

In the context of randomized algorithms, by
high probability we mean a probability greater or equal to $(1-n^{-\alpha})$ 
where $n$ is the input size and $\alpha$ is a probability parameter usually
assumed to be a constant greater than $0$. The run time of a Las Vegas algorithm
is said to be $\widetilde O(f(n))$ if the run time is no more than $c\alpha f(n)$
with probability greater or equal to $(1-n^{-\alpha})$ for all $n\geq n_0$,
where $c$ and $n_0$ are some constants, and for any constant $\alpha\geq 1$.

\section{A Las Vegas algorithm for $k$-mismatches}

\subsection{The 1-Mismatch Problem}
\noindent{\bf Problem Definition:} For this problem also, the input are two
strings $T$ and $P$ with $|T|=n,|P|=m,$ and $m\leq n$. Let $T_i$ stand for the
substring $t_{i}t_{i+1}\ldots t_{i+m-1}$, for any $i$, with $1\leq i\leq
(n-m+1)$. The problem is to check if the Hamming distance between $T_i$ and $P$
is exactly 1, for $1\leq i\leq (n-m+1)$. The following Lemma is shown in
\cite{CEP+07}.

\begin{lemma}\label{1mm}
The 1-mismatch problem can be solved in $O(n\log m)$ time using a constant
number of convolution operations.
\end{lemma}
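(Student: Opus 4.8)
The plan is to avoid computing the Hamming distance exactly (which would require one convolution per alphabet symbol) and instead to detect the ``exactly one mismatch'' condition through a constant number of weighted moment convolutions together with a single algebraic test. First I would relabel the alphabet so that each symbol is a distinct positive integer. Then for integer-valued characters the squared difference $(t_{i+j-1}-p_j)^2$ is zero exactly at matching positions and strictly positive at mismatching positions, so it behaves as a strictly positive weight supported precisely on the mismatch set of alignment $i$.

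Next I would form, for every alignment $i$, the three position-weighted moments $A_0(i)=\sum_{j=1}^m (t_{i+j-1}-p_j)^2$, $A_1(i)=\sum_{j=1}^m j\,(t_{i+j-1}-p_j)^2$, and $A_2(i)=\sum_{j=1}^m j^2\,(t_{i+j-1}-p_j)^2$. Expanding each square yields terms of the form $\sum_j w_j\, t_{i+j-1}^2$, $\sum_j w_j\, t_{i+j-1}p_j$, and $\sum_j w_j\, p_j^2$ with $w_j\in\{1,j,j^2\}$. The first kind is a convolution of the sequence $t^2$ against the fixed weights $(w_j)_j$ (a plain sliding-window sum when $w_j\equiv 1$); the second is a convolution of $t$ against the fixed sequence $(w_j p_j)_j$; and the third is a constant independent of $i$. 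Hence all three moments, for all $i$, are produced by a constant number of convolutions, each costing $O(n\log m)$, which already establishes the claimed running time and convolution count.

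The detection criterion is the key step. Writing the nonzero weights as $w_a>0$ located at the distinct mismatch positions $j_a$, we have $A_0(i)=\sum_a w_a$, $A_1(i)=\sum_a w_a j_a$, and $A_2(i)=\sum_a w_a j_a^2$. Applying the Cauchy--Schwarz inequality to the vectors $(\sqrt{w_a})_a$ and $(\sqrt{w_a}\,j_a)_a$ gives $A_1(i)^2\le A_0(i)\,A_2(i)$, with equality if and only if all the $j_a$ coincide. Since the mismatch positions are distinct, equality holds exactly when there is at most one mismatch. Therefore the Hamming distance at alignment $i$ equals $1$ if and only if $A_0(i)>0$ and $A_1(i)^2=A_0(i)\,A_2(i)$, and in that case the mismatch location is recovered as $A_1(i)/A_0(i)$.

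I expect the main obstacle to be precisely this ``no false positive'' guarantee: a single squared-difference sum cannot distinguish one mismatch from several, so the whole argument rests on the Cauchy--Schwarz equality condition for the weighted moments, which is what rules out two or more mismatches conspiring to imitate one. The only remaining care is numerical---the convolved values are integers of magnitude $O(m\,|\Sigma|^2)$ scaled by position weights of size $O(m^2)$, so the FFTs must be carried out with enough precision (or over a suitable ring) for the equality $A_1(i)^2=A_0(i)A_2(i)$ to be evaluated exactly---but this affects neither the constant number of convolutions nor the $O(n\log m)$ bound.
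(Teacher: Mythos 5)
Your proposal is correct for the 1-mismatch problem as literally stated (no wild cards), but it detects the ``exactly one'' condition by a genuinely different mechanism than the paper. The paper computes only two weighted sums, $E_i=\sum_j (t_{i+j-1}-p_j)^2t_{i+j-1}p_j$ and the position-weighted $E'_i$, takes the candidate location $B_i=E'_i/E_i$, and then \emph{verifies} that candidate by a direct $O(1)$ lookup: it checks whether the single error term at position $B_i$ accounts for all of $E_i$ (if two or more mismatches exist, $B_i$ falls strictly between them, so the spot-check necessarily fails). You instead compute three moments $A_0,A_1,A_2$ with weights $1,j,j^2$ and use the Cauchy--Schwarz equality condition $A_1^2=A_0A_2$ to certify that the weight is concentrated on a single position. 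Both tests are sound and both stay within a constant number of convolutions; your test is purely algebraic and needs no access to $T$ and $P$ after the convolutions, at the price of one more set of convolutions and roughly doubled bit-lengths in the comparison, while the paper's verification step is what lets it get away with only the zeroth and first moments.

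There is, however, one substantive gap relative to the paper's setting: the paper's version of the problem must tolerate wild cards, because the lemma is invoked inside the Las Vegas $k$-mismatch algorithm on patterns $P'$ that are wild cards (encoded as zeros) everywhere except the sampled positions. Your weight $(t_{i+j-1}-p_j)^2$ is strictly positive when a wild card (zero) is aligned with an ordinary character, so your algorithm would count every wild-card position as a mismatch and be useless in that application. The repair is exactly the paper's choice of weight: replace $(t_{i+j-1}-p_j)^2$ by $(t_{i+j-1}-p_j)^2\,t_{i+j-1}p_j$, which vanishes whenever either character is a wild card and is strictly positive precisely on genuine mismatches. Your Cauchy--Schwarz argument only uses nonnegativity of the weights and their support being the mismatch set, so it goes through verbatim with this modified weight; with that change your proof fully covers the paper's use of the lemma.
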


 \noindent{\bf The Algorithm:} Assume that each wild card in the pattern as
 well as the text is replaced with a zero. Also, assume that the characters in
 the text as well as the pattern are integers in the range $[1:|\Sigma|]$ where
 $\Sigma$ is the alphabet under concern. Let $e_{i,j}$ stand for the ``error
 term" introduced by the character $t_{i+j-1}$ in $T_i$ and the character $p_j$
 in $P$ and its value is $(t_{i+j-1}-p_j)^2t_{i+j-1}p_j$. Also, let
 $E_i=\sum_{j=1}^me_{i,j}$. There are four steps in the algorithm:
 \begin{enumerate}
 \item Compute $E_i$ for $1\leq i\leq(n-m+1)$. Note that $E_i$ will be zero if
 $T_i$ and $P$ match (assuming that a wild card can be matched with any character).
 \item Compute $E'_i$ for $1\leq i\leq (n-m+1)$, where
 $E'_i=\sum_{j=1}^m(i+j-1)(t_{i+j-1}-p_j)^2p_jt_{i+j-1}$ (for $1\leq i\leq(n-m+1))$.
 \item Let $B_i=E'_i/E_i$ if $E_i\neq 0$, for $1\leq i\leq (n-m+1)$. Note that
 if the Hamming distance between $T_i$ and $P$ is exactly one, then $B_i$ will
 give the position in the text where this mismatch occurs.
 \item If for any $i$ ($1\leq i\leq (n-m+1)$), $E_i\neq 0$ and if
 $(t_{B_i}-p_{B_i-i+1})^2t_{B_i}p_{B_i-i+1}=E_i$ then we conclude that the
 Hamming distance between $T_i$ and $P$ is exactly one.
 \end{enumerate}
 
 \noindent{\bf Note:} If the Hamming distance between $T_i$ and $P$ is exactly
 1 (for any $i$), then the above algorithm will not only detect it but also
 identify the position where there is a mismatch. Specifically, it will
 identify the integer $j$ such that $t_{i+j-1}\neq p_j$.

\subsection{The Randomized Algorithms of \cite{CEP+07}}
Two different randomized algorithms are presented in \cite{CEP+07} for solving
the $k$-mismatches problem. Both are Monte Carlo algorithms. In particular, they
output the correct answers with high probability. The run times of these
algorithms are $O(nk\log m\log n)$ and $O(n\log m(k+\log n\log\log n))$,
respectively. In this section we provide a summary of these algorithms.

The first algorithm has $O(k\log n)$ sampling phases and in each phase a
1-mismatch problem is solved. Each phase of sampling works as follows. We
choose $m/k$ positions of the pattern uniformly at random. The pattern $P$ is
replaced by a string $P'$ where $|P'|=m$, the characters in $P'$ in the
randomly chosen positions are the same as those in the corresponding positions
of $P$, and the rest of the characters in $P'$ are chosen to be wild cards. The
1-mismatch algorithm of Lemma \ref{1mm} is run on $T$ and $P'$. In each phase
of random sampling, for each $i$, we get to know if the Hamming distance
between $T_i$ and $P'$ is exactly 1 and, if so, identify the $j$ such that
$t_{i+j-1}\neq p_j^\prime$.

As an example, consider the case when the Hamming distance between $T_i$ and
$P$ is $k$ (for some $i$). Then, in each phase of sampling we would expect to
identify exactly one of the positions (i.e., $j$) where $T_i$ and $P$ differ
(i.e., $t_{i+j-1}\neq p_j$). As a result, in an expected $k$ phases of sampling
we will be able to identify all the $k$ positions in which $T_i$ and $P$
differ. It can be shown that if we make $O(k\log n)$ sampling phases, then we
can identify all the $k$ mismatches with high probability \cite{CEP+07}.

Let the number of mismatches between $T_i$ and $P$ be $q_i$ (for $1\leq i\leq
(n-m+1)$. If $q_i\leq k$, the algorithm of \cite{CEP+07} will compute $q_i$
exactly. If $q_i>k$, then the algorithm will report that the number of
mismatches is $>k$ (without estimating $q_i$) and this answer will be correct
with high probability. The algorithm starts off by first computing $E_i$ values
for every $T_i$. A list $L(i)$ of all the mismatches found for $T_i$ is kept,
for every $i$. Whenever a mismatch is found between $T_i$ and $P$ (say in
position $(i+j-1)$ of the text), the value of $E_i$ is reduced by $e_{i,j}$. If
at any point in the algorithm $E_i$ becomes zero for any $i$ it means that we
have found all the $q_i$ mismatches between $T_i$ and $P$ and $L(i)$ will have
the positions in the text where these mismatches occur. Note that if the
Hamming distance between $T_i$ and $P$ is much larger than $k$ (for example
close or equal to $m$), then the probability that in a random sample we isolate
a single mismatch is very low. Therefore, if the number of sample phases is
only $O(k\log n)$, the algorithm can only be Monte Carlo. Even if $q_i$ is
$\leq k$, there is a small probability that we may not be able to find all the
$q_i$ mismatches. Call this algorithm {\bf Algorithm 1}. If for each $i$, we
either get all the $q_i$ mismatches (and hence the corresponding $E_i$ is zero)
or we have found $>k$ mismatches between $T_i$ and $P$ then we can be sure that
we have found all the correct answers (and the algorithm will become Las
Vegas).

The authors of \cite{CEP+07} also present an improved algorithm whose run time
is $O(n\log m(k+\log n\log\log n))$. The main idea is the observation that if
$q_i= k$ for any $i$, then in $O(k\log n)$ sampling steps we can identify $\geq
k/2$ mismatches. There are several iterations where in each iteration $O(k+\log
n)$ sampling phases are done. At the end of each iteration the value of $k$ is
changed to $k/2$. Let this algorithm be called {\bf Algorithm 2}.
  
\subsection{A Las Vegas Algorithm}
In this section we present a Las Vegas algorithm for the $k$-mismatches problem
when there are wild cards in the text and/or the pattern. This algorithm runs in
time $\widetilde O(nk\log^2m+n\log^2m\log n+n\log m\log n\log\log n)$. This
algorithm is based on the algorithm of \cite{CEP+07}. When the algorithm
terminates, for each $i$ ($1\leq i\leq (n-m+1))$, either we would have
identified all the $q_i$ mismatches between $T_i$ and $P$ or we would have
identified more than $k$ mismatches between $T_i$ and $P$.

{\bf Algorithm 1} will be used for every $i$ for which $q_i\leq 2k$. For every
$i$ for which $q_i>2k$ we use the following strategy. Let $2^\ell k<q_i\leq
2^{\ell+1}k$ (where $1\leq \ell\leq \log\left
(\left\lfloor\frac{m}{2k}\right\rfloor\right )$). Let $w=\log\left
(\left\lfloor\frac{m}{2k}\right\rfloor\right )$. There will be $w$ phases in
the algorithm and in each phase we perform $O(k)$ sampling steps. Each sampling
step in phase $\ell$ involves choosing $\frac{m}{2^{\ell+1} k}$ positions of
the pattern uniformly at random (for $1\leq \ell\leq w$). As we show below, if
for any $i$, $q_i$ is in the interval $[2^\ell,2^{\ell+1}]$, then at least $k$
mismatches between $T_i$ and $P$ will be found in phase $\ell$ with high
probability. A pseudocode for the algorithm (call it {\bf Algorithm 3}) is
given below. An analysis will follow.

\begin{tabbing}
aa \= aa \= aa \= aa \= aa \= \kill
{\bf Algorithm 3}\\
{\bf repeat}\\
{\bf 1.} \> Run {\bf Algorithm 1} or {\bf Algorithm 2}\\ 
{\bf 2.} \> {\bf for} $\ell:=1$ {\bf to} $w$ {\bf do}\\
\> {\bf for} $r:=1$ {\bf to} $ck$ ($c$ being a constant) {\bf do}\\
\>\> Uniformly randomly choose $\frac{m}{2^{\ell+1} k}$ positions\\
\>\> of the pattern;\\
\>\> Generate a string $P'$ such that $|P'|=|P|$\\
\>\> and $P'$ has the same characters as $P$ in\\
\>\> these randomly chosen positions and\\
\>\> zero everywhere else;\\
\>\> Run the 1-mismatch algorithm on $T$ and $P'$.\\
\>\> As a result, if there is a single mismatch\\ 
\>\>  between $T_i$ and $P'$ then add the position of\\
\>\>  mismatch to $L(i)$ and reduce the value of $E_i$ \\
\>\>  by the right amount, for $1\leq i\leq (n-m+1)$;\\
{\bf 3.} \> {\bf if} either $E_i=0$ or $|L(i)|>k$ for every $i$,\\
\>\> $1\leq i\leq (n-m+1)$ {\bf then} quit;\\
{\bf forever}
\end{tabbing}

\begin{theorem}
{\bf Algorithm 3} runs in time $\widetilde O(nk\log^2m+n\log^2m\log n$
$+ n\log m\log n\log\log n)$ if {\bf Algorithm 2} is used in step 1. It runs in
time $\widetilde O(nk\log m\log n+nk\log^2m+n\log^2m\log n)$ if step 1 uses {\bf
Algorithm 1}.

\end{theorem}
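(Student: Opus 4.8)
The plan is to bound the cost of a single pass through the \textbf{repeat} body, argue that one such pass already establishes the termination test in step 3 with high probability, and then invoke the Las Vegas convention so that the expected constant number of passes is absorbed into the $\widetilde O$ bound.

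First I would account for the cost of one iteration. Step 1 costs whatever the chosen subroutine costs: $O(nk\log m\log n)$ for \textbf{Algorithm 1} and $O\!\left(n\log m(k+\log n\log\log n)\right)$ for \textbf{Algorithm 2}. Step 2 consists of $w=\log\lfloor m/(2k)\rfloor=O(\log m)$ phases; in each phase it performs $\Theta(k+\log n)$ sampling steps, and each sampling step is a single invocation of the $1$-mismatch routine of Lemma~\ref{1mm}, costing $O(n\log m)$. Hence step 2 costs $O\!\left(\log m\cdot(k+\log n)\cdot n\log m\right)=O(nk\log^2 m+n\log^2 m\log n)$, and step 3 is a trivial $O(n)$ scan. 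Adding step 1 gives a per-iteration bound of $O(nk\log m\log n+nk\log^2 m+n\log^2 m\log n)$ with \textbf{Algorithm 1}, and $O(nk\log^2 m+n\log^2 m\log n+n\log m\log n\log\log n)$ with \textbf{Algorithm 2}, which are exactly the two claimed run times.

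The heart of the argument, and the step I expect to be the main obstacle, is showing that a single iteration drives every $E_i$ to $0$ or every $|L(i)|$ above $k$ with high probability. For an alignment $i$ with $q_i\le 2k$, step 1 (the Monte Carlo subroutine, run with threshold $2k$) either recovers all $q_i$ mismatches, so that $E_i=0$, or certifies more than $k$ of them, with high probability. For an alignment $i$ with $2^{\ell}k<q_i\le 2^{\ell+1}k$ I would focus on phase $\ell$. There a sample of $s=m/(2^{\ell+1}k)$ positions hits a mismatch position with rate $p=s/m$, so the expected number of mismatches hit is $\lambda=q_i p\in(\tfrac12,1]$; the probability that the sample isolates \emph{exactly one} mismatch is $q_i p(1-p)^{q_i-1}\approx\lambda e^{-\lambda}=\Omega(1)$ on this range, and by Lemma~\ref{1mm} such a sample makes the $1$-mismatch routine report that position. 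By symmetry the reported position is uniform among the $q_i$ mismatches, so while fewer than $k$ distinct mismatches have been recorded in $L(i)$ the chance of a \emph{fresh} one is at least $\Omega(1)\cdot\frac{q_i-k}{q_i}=\Omega(1)$, using $q_i>2k$. A Chernoff bound over the $\Theta(k+\log n)$ sampling steps of phase $\ell$ then shows that at least $k$ distinct mismatches are found with probability $1-n^{-\Omega(\alpha)}$; the $\log n$ term in the sample count is what forces this probability to be high even for small $k$, and is the source of the $n\log^2 m\log n$ summand.

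Finally I would take a union bound over the at most $n$ alignments and the at most $\log m$ phases to conclude that, with high probability, after one iteration every alignment satisfies $E_i=0$ or $|L(i)|>k$, so step 3 exits. Since each iteration independently succeeds with high probability, the number of iterations is stochastically dominated by a geometric variable with constant mean; for any $\alpha\ge 1$ the loop therefore runs $O(\alpha)$ times with probability at least $1-n^{-\alpha}$. Multiplying the per-iteration cost by this $O(\alpha)$ factor and comparing against the definition of $\widetilde O$ yields both stated bounds, completing the proof.
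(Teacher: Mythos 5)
Your proposal is correct and follows essentially the same route as the paper's proof: bound the per-iteration cost (step 2 being $O(\log m)$ phases of $\Theta(k+\log n)$ invocations of the 1-mismatch routine of Lemma~\ref{1mm} at $O(n\log m)$ each, added to the cost of step 1), show that in phase $\ell$ an alignment with $2^\ell k<q_i\le 2^{\ell+1}k$ isolates a single mismatch in each sampling step with constant probability, and conclude via Chernoff bounds and a union bound that the exit condition of step 3 holds after one iteration with high probability. The only differences are in low-level details — the paper computes the isolation probability via binomial coefficients (obtaining $\ge 1/(2e)$) and establishes distinctness by bounding the probability that $6k$ isolated mismatches contain at most $1.1k$ distinct ones, whereas you use a Poisson-style approximation and a coupon-collector freshness argument; both are sound, and your explicit geometric bound on the number of outer iterations makes the Las Vegas accounting slightly more rigorous than the paper's closing remark.
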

\begin{proof}
As shown in \cite{CEP+07}, the run time of {\bf Algorithm 1} is $O(nk\log m\log
n)$ and that of {\bf Algorithm 2} is $O(n\log m(k+\log n\log\log n))$. The
analysis will be done with respect to an arbitrary $T_i$. In particular, we
will show that after the specified amount of time, with high probability, we
will either know $q_i$ or realize that $q_i>k$. It will then follow that the
same statement holds for every $T_i$ (for $1\leq i\leq (n-m+1)$.

Consider phase $\ell$ of step 2 (for an arbitrary $1\leq \ell\leq w$). Let
$2^\ell k<q_i\leq 2^{\ell+1}k$ for some $i$. Using the fact that
${{a}\choose{b}}\approx\left (\frac{ae}{b}\right )^b$, the probability of isolating one of the
mismatches in one run of the sampling step is:

\begin{align*}
\frac{{{m-q_i}\choose{m/(2^{\ell+1} k)-1}}q_i}{{{m}\choose{m/(2^{\ell+1}
k)}}}  \geq\frac{{{m-2^{\ell+1}k}\choose{m/(2^{\ell+1} k)-1}}2^\ell
k}{{{m}\choose{m/(2^{\ell+1} k)}}}\geq\frac{1}{2e}
\end{align*}

 As a result, using
Chernoff bounds, it follows that if $13ke$ sampling steps are made in phase
$\ell$, then at least $6k$ of these steps will result in the isolation of
single mismatches (not all of them need be distinct) with high probability
(assuming that $k=\Omega(\log n)$). Moreover, we can see that at least $1.1k$
of these mismatches will be distinct. This is because the probability that
$\leq 1.1k$ of these are distinct is $\leq {{{q_i}\choose{1.1k}}}/{\left
(\frac{1.1k}{q_i}\right )^{6k}}$ $\leq 2^{-2.64k}$ using the fact that $q_i\geq
2k$. This probability will be very low when $k=\Omega(\log n)$.

In the above analysis we have assumed that $k=\Omega(\log n)$. If this is not
the case, in any phase of step 2, we can do $c\alpha\log n$ sampling steps, for
some suitable constant $c$. As a result, each phase of step 2
takes $O(n\log m(k+\log n))$ time. We have $O(\log m)$ phases. Thus the run time of
step 2 is $O(n\log^2m(k+\log n))$. Also, the probability that the condition in
step 3 holds is very high.

Therefore, the run time of the entire algorithm is 
$\widetilde O(nk\log^2m+n\log^2m\log n$ $+ n \log m\log n\log\log n)$ if {\bf
Algorithm 2} is used in step 1 or $\widetilde O(nk\log m\log
n+nk\log^2m+n\log^2m\log n)$ if {\bf Algorithm 1} is used in step 1.
\end{proof}

\section{Approximate Counting of Mismatches}

The algorithm of \cite{KAR93} takes as input a text $T=t_1t_2\ldots t_n$ and a
pattern $P=p_1p_2\ldots p_m$ and approximately counts the Hamming distance
between $T_i$ and $P$ for every $1\leq i\leq (n-m+1)$. In particular, if the
Hamming distance between $T_i$ and $P$ is $H_i$ for some $i$, then the
algorithm outputs $h_i$ where $H_i\leq h_i\leq (1+\epsilon)H_i$ for any
$\epsilon>0$ with high probability (i.e., a probability of
$\geq(1-m^{-\alpha}))$. The run time of the algorithm is
$O(n\log^2m/\epsilon^2)$. In this section we show how to extend this algorithm
to the case where there could be wild cards in the text and/or the pattern.

Let $\Sigma$ be the alphabet under concern and let $\sigma=|\Sigma|$. 
The algorithm runs in phases and in each phase we randomly map the elements of
$\Sigma$ to $\{1,2\}$. A wild card is mapped to a zero. Under this mapping we
transform $T$ and $P$ to $T'$ and $P'$, respectively.  We then compute a vector
$C$ where $C[i]=\sum_{j=1}^m (t'_{i+j-1}-p'_j)^2t'_{i+j-1}p'_j$. This can be
done using $O(1)$ convolution operations (see e.g., \cite{CEP+07}). A series of
$r$ such phases (for some relevant value of $r$) is done at the end of which we
produce estimates on the Hamming distances. The intuition is that if a
character $x$ in $T'$ is aligned with a character $y$ in $P'$, then across all
the $r$ phases, the expected contribution to $C$ from these characters is $r$
if $x\neq y$ (assuming that $x$ and $y$ are non wild cards).  If $x=y$ or if
one or both of $x$ and $y$ are a wild card, the contribution to $C$ is zero.

\begin{tabbing}
aa \= aa \= aa \= aa \= aa \= \kill
\>{\bf Algorithm 4}\\
\>{\bf 1.} \> {\bf for} $i:=1$ {\bf to} $(n-m+1)$ {\bf do} $C[i]=0$.\\
\>{\bf 2.} \> {\bf for} $\ell:=1$ {\bf to} $r$ {\bf do}\\
\>\>\> Let $Q$ be a random mapping of $\Sigma$ to $\{1,2\}$.\\
\>\>\> In particular, each element of $\Sigma$ is mapped \\
\>\>\> to 1 or 2 randomly with equal probability.\\
\>\>\>   Each wild card is mapped to a zero.\\
\>\>\> Obtain two strings $T'$ and $P'$
where $t_i'=Q(t_i)$\\
\>\>\> for $1\leq i\leq n$ and $p_j'=Q(p_j)$ for $1\leq j\leq m$.\\
\>\>\> Compute a vector $C_\ell$ where\\
\>\>\>$C_\ell[i]=\sum_{j=1}^m(t'_{i+j-1}-p'_j)^2~ t_{i+j-1}'p_j'$ \\
\>\>\>\> for $1\leq i\leq (n-m+1)$.\\
\>\>\> {\bf for} $i:=1$ {\bf to} $(n-m+1)$ {\bf do} $C[i]:=C[i]+C_\ell[i]$.\\
\>{\bf 3.} \> {\bf for} $i:=1$ {\bf to} $(n-m+1)$ {\bf do}\\
\>\>\> output
$h_i:=\frac{C[i]}{r}$.\\
\>Here $h_i$ is an estimate on the Hamming distance\\
\> $H_i$ between $T_i$ and $P$.
\end{tabbing}

\noindent{\bf Analysis:} Let $x$ be a character in $T$  and let $y$ be a
character in $P$. Clearly, if $x=y$ or if one or both of  $x$ and $y$ are a wild
card, the contribution of $x$ and $y$ to any $C_\ell[i]$  is zero. If $x$ and
$y$ are non wild cards and if $x\neq y$ then the expected  contribution of these
to any $C_\ell[i]$ is 1. Across all the $r$ phases, the  expected contribution
of $x$ and $y$ to any $C_\ell[i]$ is $r$. For a given $x$  and $y$, we can think
of each phase as a Bernoulli trial with equal probabilities  for success and
failure. A success refers to the possibility of $Q(x)\neq Q(y)$.  The expected
number of successes in $r$ phases is $\frac{r}{2}$. Using Chernoff  bounds, this
contribution is no more than $(1+\epsilon)r$ with probability  $\geq
1-\exp(-\epsilon^2r/6)$. Probability that this statement holds  for every pair
$(x,y)$ is $\geq 1-m^2\exp(-\epsilon^2r/6)$. This probability will  be $\geq
1-m^{-\alpha}/2$ if $r\geq \frac{6(\alpha+3)\log_em}{\epsilon^2}$.  Similarly,
we can show that for any pair of non wild card characters,  the contribution of
them to any $C_\ell[i]$ is no less than $(1-\epsilon)r$ with  probability $\geq
1-m^{-\alpha}/2$ if $r\geq\frac{4(\alpha+3)\log_em}{\epsilon^2}$.

Put together, for any pair $(x,y)$ of non wild cards, the contribution of $x$
and $y$ to any $C_\ell[i]$ is in the interval $(1\pm\epsilon)r$ with
probability $\geq(1-m^{-\alpha})$ if
$r\geq\frac{6(\alpha+3)\log_em}{\epsilon^2}$. Let $H_i$ be the Hamming distance
between $T_i$ and $P$ for some $i$ ($1\leq i\leq (n-m+1))$. Then, the estimate
$h_i$ on $H_i$ will be in the interval $(1\pm \epsilon)H_i$ with probability
$\geq(1-m^{-\alpha})$. As a result, we get the following Theorem.

\begin{theorem}
Given a text $T$ and a pattern $P$, we can estimate the Hamming distance
between $T_i$ and $P$, for every $i,~1\leq i\leq (n-m+1)$, in 
$O(n\log^2m/\epsilon^2)$ time. If $H_i$ is the Hamming distance between $T_i$
and $P$, then the above algorithm outputs an estimate that is in the interval
$(1\pm\epsilon) H_i$ with high probability.
\end{theorem}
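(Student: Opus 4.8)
The plan is to prove the two assertions of the theorem in turn: the $O(n\log^2 m/\epsilon^2)$ running time, and the $(1\pm\epsilon)$ approximation guarantee holding with high probability. I would dispatch the running time first, since it is immediate. Each of the $r$ phases of \textbf{Algorithm 4} constructs $T'$ and $P'$ in $O(n)$ time and then forms the vector $C_\ell$ from a constant number of convolutions (expanding $(t'-p')^2t'p'$ into a fixed set of correlation terms, each evaluated by one FFT), so a phase costs $O(n\log m)$ and the whole algorithm costs $O(rn\log m)$. It then only remains to substitute the value $r=\Theta(\log m/\epsilon^2)$ dictated by the concentration analysis below to obtain the claimed $O(n\log^2 m/\epsilon^2)$ bound.

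For correctness I would start from a per-pair expectation computation. Fix a text character $x$ aligned against a pattern character $y$; under one random mapping $Q$ they contribute $(Q(x)-Q(y))^2Q(x)Q(y)$, which is $0$ whenever $x=y$ or either character is a wild card, and otherwise equals $2$ precisely when $Q(x)\neq Q(y)$ (probability $1/2$) and $0$ otherwise. Hence a genuine non-wild mismatch contributes $1$ in expectation per phase, and by linearity $E[C[i]]=rH_i$, so that $h_i=C[i]/r$ is an unbiased estimator of $H_i$. The remaining and only substantive task is to turn this expectation statement into a high-probability interval.

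The delicate point, and the step I expect to be the main obstacle, is the statistical dependence among positions: within a fixed alignment $i$ the \emph{same} mapping $Q_\ell$ drives all $m$ positions, and two mismatching positions that share a character-value pair $(x,y)$ always contribute identically, so the per-position terms are highly correlated and $C[i]$ cannot be Chernoff-bounded directly. I would sidestep this by concentrating the per-pair totals rather than $C[i]$. For each distinct pair $(x,y)$ with $x\neq y$ set $S_{(x,y)}=\sum_{\ell=1}^{r}(Q_\ell(x)-Q_\ell(y))^2Q_\ell(x)Q_\ell(y)=2\sum_{\ell=1}^{r}\mathbf{1}[Q_\ell(x)\neq Q_\ell(y)]$. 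Because the $r$ mappings are drawn independently, the indicators are i.i.d.\ Bernoulli$(1/2)$ with mean $r/2$, so a two-sided Chernoff bound gives $S_{(x,y)}\in(1\pm\epsilon)r$ with failure probability at most $2\exp(-\epsilon^2r/6)$. A union bound over the relevant character-value pairs (at most $m^2$ of them, following the preceding analysis) then makes \emph{every} pair-total concentrate simultaneously with probability $\geq 1-m^{-\alpha}$, provided $r\geq 6(\alpha+3)\log_e m/\epsilon^2$, which is exactly the value of $r$ used above.

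Finally I would assemble the pieces. On the high-probability event that every $S_{(x,y)}$ lies in $(1\pm\epsilon)r$, fix any $i$ and write $C[i]=\sum_{j}S_{(t_{i+j-1},p_j)}$, the sum ranging over the $H_i$ positions $j$ at which $t_{i+j-1}$ and $p_j$ are non-wild and unequal (counted with multiplicity for repeated pairs). This is a sum of exactly $H_i$ terms each in $(1\pm\epsilon)r$, hence $C[i]\in(1\pm\epsilon)rH_i$ deterministically on this event; dividing by $r$ yields $h_i\in(1\pm\epsilon)H_i$ simultaneously for all $i$, which is precisely the approximation claimed with high probability. The only care needed in this last step is to observe that the bound on $C[i]$ follows from the pair-wise bounds without any further probabilistic argument, so no additional loss in the success probability is incurred.
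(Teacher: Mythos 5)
Your proposal is correct and follows essentially the same route as the paper's own analysis: the same per-pair expectation computation, the same per-pair Bernoulli/Chernoff concentration with a union bound over at most $m^2$ character pairs and the same choice $r\geq 6(\alpha+3)\log_e m/\epsilon^2$, followed by summing the $H_i$ pair contributions to bound $C[i]$. Your write-up is in fact slightly more careful than the paper's, since you explicitly identify the dependence among positions sharing one mapping $Q_\ell$ as the reason for concentrating per-pair totals rather than $C[i]$ directly, a point the paper leaves implicit.
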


\noindent{\bf Observation 1.} In the above algorithm we can ensure that $h_i\geq H_i$ and $h_i\leq(1+\epsilon)H_i$ with high probability by changing the estimate computed in step 3 of Algorithm 4 to $\frac{C[i]}{(1-\epsilon)r}$.

\noindent{\bf Observation 2.} As in \cite{KAR93}, with $O\left (\frac{m^2\log m}{\epsilon^2}\right )$ pre-processing we can ensure that Algorithm 4 never errs (i.e., the error bounds on the estimates will always hold). 
\section{Deterministic algorithms}
In this section we present deterministic algorithms for the problems of interest. 
We first summarize two well known techniques for counting
matches: convolution and marking (see e.g. \cite{ALP04}). 
In terms of notation, $T[i]$ is the character at position $i$ in $T$, $T_{i..j}$
is the substring of $T$ between $i$ and $j$ and $T_i$ is $T_{i..i+m-1}$ as
before.

{\bf Convolution:} Given a string $S$ and a character $\alpha$ define $S^{\alpha}$
to be a string where $S^\alpha[i]=1$ if $S[i]=\alpha$ and $0$ otherwise.
Let $C^\alpha=convolution(T^\alpha, P^\alpha)$. Then $C^\alpha[i]$ gives the
number of positions $j$ where $P[j] = T[i+j-1] = \alpha$,
which is the number of matches ``contributed'' by character $\alpha$ to
the alignment between $P$ and $T_i$. Then $\sum_{\alpha \in
\Sigma}C^{\alpha}[i]$ is the number of matches between $P$ and $T_i$.

{\bf Marking:}\label{sec_marking} Given a character
$\alpha$ let $Pos[\alpha] = \{i \in [1..m] | P[i] = \alpha\}$.
Let $\Gamma$ be a subset of $\Sigma$. The number of matches between $P$ and $T_i$ where
the matching character is from $\Gamma$ can be computed by the following algorithm. The number
of matches are reported in $M$. The algorithm takes $O(n \max_{\alpha \in \Gamma}|Pos_\alpha|)$ time.

\begin{center}
\label{alg_counting}
\begin{tabbing}
aa \= aa \= aa \= aa \= aa \= \kill
\>{\bf Algorithm 5}  $Mark(T, n, \Gamma)$\\
\>\> {\bf for} $i:=1$ {\bf to} $n$ {\bf do} $M[i] := 0$\\
\>\> {\bf for} $i:=1$ {\bf to} $n$, {\bf if} $T[i] \in \Gamma$ {\bf do} \\
\>\>\> {\bf for} {$j \in Pos[T[i]]$}, {\bf if} $i-j+1 > 0$ {\bf do}\\
\>\>\>\> $M[i-j+1]${\bf ++} \\
\>\> {\bf return} M
\end{tabbing}
\end{center}

\subsection{Pattern matching with mismatches}
\label{sec_glogm}
For pattern matching with mismatches, without wild cards,
Abrahamson \cite{ABR87} gave the following $O(n\sqrt{m \log m})$ time algorithm.
Let $A$ be a set of the most frequent characters
in the pattern. 1) Using convolutions, count how many matches each character in $A$ contributes to
every alignment. 2) Using marking, count how many matches each character
in $\Sigma - A$, contributes to every alignment. 3) Add the two
numbers to find for every alignment, the number of matches between the pattern
and the text. The convolutions take $O(|A| n \log m)$ time. A
character in $\Sigma - A$ cannot appear more than $m/|A|$ times in the pattern,
otherwise, each character in $A$ has a frequency greater than
$m/|A|$, which is not possible. Thus, the run time for marking is $O(n m / |A|)$.
If we equate the two run times we find the optimal $|A| = \sqrt{m / \log m}$ which gives a
total run time of $O(n \sqrt{m \log m})$.

For pattern matching with mismatches and wild cards, 
a fairly complex algorithm is given in \cite{ALP04}. The run time is $O(n
\sqrt{g} \log m)$ where $g$ is the number of non-wild card positions in the
pattern. The problem can also be solved through a simple modification of 
Abrahamson's algorithm, in time $O(n\sqrt{m \log m})$, as pointed out in 
\cite{CEP+07}. We now prove the following result:

\begin{theorem} Pattern matching with mismatches and wild cards can be solved
in $O(n\sqrt{g \log m})$ time, where $g$ is the number of non-wild card
positions in the pattern.
\end{theorem}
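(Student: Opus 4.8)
The plan is to adapt Abrahamson's algorithm, described above, by observing that wild cards can be absorbed into the match-counting in a way that makes the running time scale with $g$ rather than $m$. First I would replace every wild card, in both $T$ and $P$, by the symbol $0$, so that genuine characters lie in $[1:|\Sigma|]$ and a wild card can never equal a real character. Under this convention, for alignment $i$ the Hamming distance $H_i$ (counting a position as a mismatch only when both aligned characters are non-wild and differ) satisfies
\[
H_i = N_i - M_i,
\]
where $N_i$ is the number of positions $j$ at which both $t_{i+j-1}$ and $p_j$ are non-wild, and $M_i$ is the number of positions at which both are non-wild and equal (i.e.\ the number of genuine-character matches). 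So it suffices to compute the two sequences $N_i$ and $M_i$.

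The quantity $N_i$ is easy: define indicator strings $T^\ast$ and $P^\ast$ with a $1$ at every non-wild position and $0$ elsewhere, and let $N = \mathrm{convolution}(T^\ast, P^\ast)$; this costs a single $O(n\log m)$ convolution. For $M_i$ I would run Abrahamson's frequent/infrequent split on the genuine characters only. Let $A$ be the set of the $|A|$ most frequent real characters appearing in $P$. For each $\alpha \in A$ I compute $C^\alpha = \mathrm{convolution}(T^\alpha, P^\alpha)$ as in the match-counting description above; because $T^\alpha$ and $P^\alpha$ are $1$ only at positions holding the real character $\alpha$, a wild card (now $0$) never contributes, so $C^\alpha[i]$ counts exactly the genuine matches of $\alpha$. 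The total convolution cost is $O(|A|\, n \log m)$. For the characters in $\Sigma \setminus A$ I would invoke the marking procedure (Algorithm 5) with $\Gamma = \Sigma \setminus A$; here too a text position holding a wild card is never in $\Gamma$, and a pattern wild card contributes only to $Pos[0]$, which is never scanned, so marking again counts exactly genuine matches. Summing the convolution counts and the marking counts gives $M_i$, and then $H_i = N_i - M_i$.

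The crux of the running-time analysis is the frequency bound, and this is where $g$ enters in place of $m$. Since the real characters of $P$ occupy exactly $g$ positions in total, a pigeonhole argument shows that any real character outside $A$ occurs at most $g/|A|$ times: if some character outside $A$ occurred more often, every character in $A$ would too, forcing more than $g$ non-wild positions. Hence the marking step costs $O\!\left(n \cdot g/|A|\right)$. Balancing this against the convolution cost $O(|A|\, n\log m)$ gives the optimal choice $|A| = \sqrt{g/\log m}$, for a total of $O(n\sqrt{g\log m})$; the extra $O(n\log m)$ spent computing $N$ is absorbed into this bound.

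The step I expect to require the most care is the correctness of the wild-card bookkeeping rather than the timing: I must check that replacing wild cards by $0$ makes the decomposition $H_i = N_i - M_i$ exactly right, and that neither the convolutions $C^\alpha$ (for $\alpha \in A$, all $\alpha \geq 1$) nor the marking pass can ever register a spurious match involving a wild card. Once that invariant is verified, the frequency argument and the optimization of $|A|$ are identical to the wild-card-free case, and the theorem follows.
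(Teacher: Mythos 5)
Your proposal is correct and takes essentially the same approach as the paper: an Abrahamson-style frequent/infrequent split restricted to the non-wild characters (with the same pigeonhole bound $g/|A|$ and the same balancing choice $|A|=\sqrt{g/\log m}$), plus one extra convolution on indicators of non-wild positions---your $N=\mathrm{convolution}(T^{\ast},P^{\ast})$ is exactly the paper's $C=\mathrm{convolution}(T^{\neg w},P^{\neg w})$. The only difference is presentational: the paper counts matches (adding $m-C[i]$ as the wild-card contribution and reporting match counts), whereas you compute the Hamming distance directly via $H_i=N_i-M_i$; the two are the same identity.
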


\begin{proof}
 Ignoring the wild cards for now, let $A$ be the set of
the most frequent characters in the pattern. As above, count matches contributed
by characters in $A$ and $\Sigma-A$ using convolution and marking respectively.
By a similar reasoning as above, the characters used in the marking phase will not 
appear more than $g / |A|$ times in the pattern. If we equate the run times for the two 
phases we obtain $O(n \sqrt {g \log m})$ time. We are now left to count how many matches are
contributed by the wild cards. For a string $S$ and a character $\alpha$, define $S^{\neg \alpha}$ as
$S^{\neg \alpha}[i] = 1-S^\alpha[i]$. Let
$w$ be the wild card character. Compute $C = convolution(T^{\neg w}, P^{\neg w})$. Then,
for every alignment $i$, the number of positions that have a wild card either in the
text or the pattern or both, is $m-C[i]$. Add $m-C[i]$ to the previously computed counts and output. 
The total run time is $O(n \sqrt{g \log m})$.
\end{proof}

\subsection{Pattern matching with $k$ mismatches}
For the $k$-mismatches problem, without wild cards, an $O(k(m\log m+n))$ time
algorithm that requires $O(k(m+n))$ additional space is presented in \cite{LV85}. Another algorithm, that takes
  $O(m\log m + kn)$ time and uses only $O(m)$ additional space is presented
in \cite{GG86}. \commentOut{In the latter, a suffix tree of the pattern is built and enhanced to
support lowest common ancestor queries in $O(1)$ time.} We define the following
problem which is of interest in the discussion.

\begin{problem}
{\bf Subset $k$-mismatches:} Given a text $T$ of length $n$, a pattern $P$ of
length $m$, a set of positions $S=\{i|1 \leq i \leq n-m+1\}$ and an integer $k$, output the
positions $i \in S$ for which the Hamming distance between $P$ and $T_i$ is
less or equal to $k$.
\end{problem}

The problem becomes the regular $k$-mismatches
problem if $|S|=n$. However, if $S$ contains only a fraction of all positions,
the $O(nk)$ algorithms mentioned above are too costly.
A better alternative is proposed in \cite{ALP04}: build a suffix tree of
$T\#P$ and enhance it to support LCA queries in $O(1)$ time. Given position
$i$, perform an LCA query to find the position of the first mismatch between $P$
and $T_i$, call it $j$. Then, perform another LCA to find the first mismatch
between $P_{j+1..m}$ and $T_{i+j+1.. i+m-1}$, which is the second mismatch of alignment $i$.
Repeatedly jump from one mismatch to the next, until the end
of the pattern is reached or we have found more than $k$ mismatches. This is
called the Kangaroo method. It can process $|S|$ positions in 
$O(n+m+|S|k)$ time and it uses $O(n+m)$ additional memory for the LCA enhanced suffix tree. 
We prove the following result:

\begin{theorem}
{\bf Subset $k$-mismatches} can be solved in $O(n+m+|S|k)$ time using only
$O(m)$ additional memory.
\end{theorem}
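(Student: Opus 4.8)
The plan is to keep the Kangaroo method of \cite{ALP04} essentially intact but to avoid ever building a suffix tree over the whole of $T\#P$, which is exactly what forces the $O(n+m)$ memory. The key observation is that each alignment window $T_i$ spans only $m$ consecutive text characters, so to answer all the longest-common-extension (LCE) queries that the Kangaroo method issues for a fixed $i$, it suffices to have a structure that knows $P$ together with the length-$m$ text window around position $i$. I would therefore process $T$ in overlapping blocks of length $2m$ and rebuild a small LCA-enhanced suffix tree for each block, reusing the same $O(m)$ memory across blocks.

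Concretely, I would first partition the starting positions $1,\ldots,n-m+1$ into groups $G_b=\{i : (b-1)m+1\le i\le bm\}$ for $b=1,2,\ldots,\lceil n/m\rceil$, and associate with group $b$ the text block $B_b=T_{(b-1)m+1..\min((b+1)m,n)}$, which has length at most $2m$. Every alignment $T_i$ with $i\in G_b$ is fully contained in $B_b$. For each $b$ in turn I would build a suffix tree of $P\#B_b$ and enhance it to answer LCA (hence LCE) queries in $O(1)$ time; since $|P\#B_b|=O(m)$, this uses only $O(m)$ space and $O(m)$ construction time, and the structure is discarded before moving to the next block. Then, for each $i\in S\cap G_b$, I would run the Kangaroo procedure exactly as in \cite{ALP04}, translating the global position $i$ to its local offset $i-(b-1)m$ inside $B_b$ and using the local suffix tree to jump from one mismatch to the next, stopping when the end of $P$ is reached or more than $k$ mismatches have been found.

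For the running time there are $O(n/m)$ blocks, each costing $O(m)$ to build its suffix tree, so all constructions together take $O(n)$ time; the Kangaroo jumps charge $O(k)$ per processed position for a total of $O(|S|k)$; adding $O(m)$ for reading $P$ gives $O(n+m+|S|k)$, matching \cite{ALP04}. The additional memory at any instant is just the single local suffix tree, i.e. $O(m)$, as claimed.

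The main obstacle I anticipate is bookkeeping at the block boundaries rather than anything deep. I must check that the overlap of $m$ between consecutive blocks is enough that every window $T_i$ with $i\in G_b$ lies entirely within $B_b$ (it does, since $i\in G_b$ gives $i+m-1\le bm+m-1<(b+1)m$), that no position of $S$ is processed twice or skipped, and that the LCE queries respect the separator $\#$ so that a match never runs across the $P$--$B_b$ boundary. The amortized $O(n)$ construction bound relies on each text character lying in only a constant number of blocks, which holds because the blocks advance by $m$ while having length $2m$.
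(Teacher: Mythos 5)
Your proposal is correct, but it reaches the bound by a genuinely different route than the paper. You keep the Kangaroo method of \cite{ALP04} intact and make it fit in $O(m)$ space by a block decomposition: cut the text into windows of length $2m$ overlapping by $m$, rebuild an LCA-enhanced suffix tree of $P\#B_b$ for each window, and answer all LCE queries for alignments starting in that window locally; the accounting ($O(n/m)$ rebuilds at $O(m)$ each, $O(k)$ per position in $S$, at most two blocks per text character) is sound, and your boundary check $i+m-1<(b+1)m$ is exactly the right one, so no LCE is ever truncated by a block edge. The paper instead builds a single LCA-enhanced suffix tree of the \emph{pattern alone}, scans the text left to right, and at each step greedily extracts the longest prefix of the unscanned text that occurs somewhere in the pattern; within such a region the text coincides with a known pattern substring, so the mismatch-jumping queries become pattern-versus-pattern LCA queries on the one fixed structure, and the per-alignment $O(k)$ bound follows because maximality of the region guarantees each processing step either finds a fresh mismatch or exhausts the alignment. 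The trade-off: your approach is more modular --- the Kangaroo procedure and its analysis are reused verbatim, and the $O(|S|k)$ term needs no new argument --- at the cost of rebuilding the index $\Theta(n/m)$ times and of indexing text characters inside the structure; the paper's approach builds one index over the pattern only and touches the text in a single left-to-right pass (so it even tolerates streamed text), but its correctness rests on the subtler maximality argument for the per-alignment charge. Both rely on the same linear-time suffix tree construction and $O(1)$ LCA assumptions, so neither is more demanding on that front.
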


\begin{proof}
  The algorithm  is the following - also see algorithm
$6$ in the appendix. Build an LCA-enhanced suffix tree of the
pattern. 1) Find the longest unscanned region of
the text which can be found somewhere in the pattern. 2) For every alignment
that overlaps this region of the text, count how many mismatches are found in
the overlapping region. To do this, we compare the pattern against itself, by using
LCA queries, because we know that the text is the same as the pattern, in that region. Repeat from step 1 until 
the entire text has been scanned. Every time we process an alignment in step 2, 
we either discover at least one additional mismatch or we reach the end of the alignment.
This is true since in step 1 we always pick the longest portion of text that can be found somewhere in
the pattern.
In addition, every alignment for which we have found more than $k$ mismatches is
excluded from further consideration. This ensures we spend $O(k)$ time per alignment. It
takes $O(m)$ time to build the LCA enhanced suffix tree of the pattern and
$O(n)$ additional time to scan the text from left to right. Thus, the total run time is
$O(n+m+|S|k)$ with $O(m)$ additional memory.
\end{proof}

\subsection{An $O(n\sqrt{k \log k})$ time algorithm for $k$-mismatches}
\label{sec_nsqrtk}
 
For the $k$-mismatches problem, without wild cards, a fairly complex 
$O(n \sqrt{k \log k})$ time algorithm  is given in \cite{ALP04}. 
The algorithm classifies the inputs into several cases. For each case
it applies a combination of marking followed by a filtering step, the Kangaroo
 method, or convolutions. The goal is to not exceed $O(n \sqrt{k \log k})$ time 
 in any of the cases.
We now present an algorithm with only two cases which has the same worst case run time.
The new algorithm can be thought of as a generalization of the algorithm in \cite{ALP04} 
as we will discuss later.
This generalization not only greatly simplifies the algorithm but it also
reduces the expected run time. This happens because we use information about the frequency of the
characters in the text and try to minimize the work done by convolutions and marking.

For any character $\alpha \in \Sigma$, let $f_\alpha$ be its frequency in the
pattern, and $F_\alpha$ be its frequency in the text. Clearly, $\sum_{\alpha
\in \Sigma} f_\alpha = m$ and $\sum_{\alpha \in \Sigma} F_\alpha = n$. A
position $j$ in the pattern where $p[j] = \alpha$ is called an {\em instance} of
$\alpha$.  Consider every instance of character $\alpha$ as an object of size
$1$ and cost $F_\alpha$. We want to fill a knapsack of size $2k$ at a cost less
than a given budget $B$.
This problem can be optimally solved by a greedy approach where we include all the instances
of the least expensive character, then all the instances of the second least expensive
character and so on, until we have $2k$ items or we have exceeded $B$. The last
character considered may have only a subset of its instances included, but for
ease of explanation assume that there are no such characters.

The algorithm is the following: Case 1) If we can fill the knapsack within budget
$B$, we apply the marking algorithm for the characters whose instances are
included in the knapsack. If alignment $i$ matches perfectly, we
will obtain exactly $2k$ marks at position $i$ in the text. Thus, any position
which has less than $k$ marks must  have more than
$k$ mismatches. Based on this observation, we run Subset $k$-mismatches to
check only those positions with at least $k$ marks.

Case 2) If we cannot fill the knapsack within the given budget we do the following:
for the characters we could fit in the knapsack before we ran out of
budget, we use the marking algorithm to count the number of matches they 
contribute to each alignment. For characters not in the knapsack, we use
convolutions to count the number of matches they contribute to each alignment.
We add the two counts and get the exact number of matches for every alignment. We call
this algorithm {Knapsack $k$-mismatches} (also see algorithm 7 in the
appendix).

\begin{theorem}
{\bf Knapsack $k$-mismatches} has worst case run time $O(n \sqrt{k \log
k})$.
\end{theorem}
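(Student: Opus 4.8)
The plan is to fix the budget at $B=n\sqrt{k\log k}$ (the value that balances the two branches) and to analyze the two cases of Knapsack $k$-mismatches separately, showing each runs in $O(n\sqrt{k\log k})$ time. The first thing I would record is that the marking cost incurred for a set $\Gamma$ of characters equals the number of increments performed, namely $\sum_{\alpha\in\Gamma}f_\alpha F_\alpha$, which is exactly the knapsack cost of the instances of $\Gamma$; hence in either case the marking work is bounded by $B=O(n\sqrt{k\log k})$.

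For Case 1 (the knapsack is filled with $2k$ instances at cost at most $B$), I would bound the marking by $B$ and then use the filtering observation: a perfectly matching alignment receives exactly $2k$ marks, so any alignment with at most $k$ mismatches receives at least $k$ marks, and only positions carrying at least $k$ marks need be examined further. Since the total number of marks equals the marking cost, which is at most $B$, at most $B/k$ positions survive. Feeding this set $S$ with $|S|\le B/k$ into the Subset $k$-mismatches routine costs $O(n+m+|S|k)=O(n+m+B)=O(n\sqrt{k\log k})$.

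For Case 2 (the budget is exhausted before $2k$ instances are collected), marking again costs at most $B$, and the only remaining term is the convolution work for the characters $\Delta$ outside the knapsack, each convolution costing $O(n\log m)$. The heart of the proof is to bound $|\Delta|$. Order the characters by increasing $F_\alpha$, let $\alpha_{t+1}$ be the first character the greedy rejects, and write $F^\ast=F_{\alpha_{t+1}}$ and $s=\sum_{j\le t}f_{\alpha_j}<2k$. Since the packed characters are the cheapest, the knapsack cost is at most $F^\ast s$; as adding $\alpha_{t+1}$ overruns the budget, $F^\ast s+f_{\alpha_{t+1}}F^\ast>B$, and because $s+f_{\alpha_{t+1}}\le 2k$ (or else an instance of $\alpha_{t+1}$ is already the $2k$-th cheapest and the Case 2 condition forces $F^\ast>B/(2k)$ directly) this gives $F^\ast>B/(2k)$. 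Every character in $\Delta$ has text frequency at least $F^\ast$, so from $\sum_{\alpha\in\Delta}F_\alpha\le n$ I obtain $|\Delta|\le n/F^\ast<2kn/B=O(\sqrt{k/\log k})$.

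The inequality $F^\ast>B/(2k)$ is the step I expect to be the main obstacle, since it is precisely what converts ``the budget was exhausted'' into a genuine lower bound on the frequency of every convolved character, and everything else is bookkeeping. Combining the pieces, the convolution work in Case 2 is $O\!\left(\sqrt{k/\log k}\cdot n\log m\right)$. The remaining delicate point is that this term carries $\log m$ rather than $\log k$; it collapses to the claimed $O(n\sqrt{k\log k})$ once $\log m=O(\log k)$, which is guaranteed after the standard reduction that lets one assume $m=k^{O(1)}$ (as used in \cite{ALP04}), so that each convolution effectively costs $O(n\log k)$ and the product becomes $O(n\sqrt{k\log k})$. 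Finally I would note that a partially-packed boundary character contributes only lower-order work and can be absorbed into either branch.
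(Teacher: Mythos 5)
Your proof is correct, and its skeleton matches the paper's: the same two cases, the same treatment of Case~1 (marking cost at most $B$, hence at most $B/k$ surviving positions, fed to Subset $k$-mismatches for $O(n+m+B)$ total), the same balancing of marking against convolution work, and the same appeal to \cite{ALP04} for $k<m^{1/3}$ so that $\log m = O(\log k)$. Where you genuinely differ is the key counting step of Case~2, i.e., bounding the number of characters that must be convolved. The paper argues via \emph{pattern} frequencies: it first absorbs into the knapsack every character with $f_\alpha < B/n$, at extra cost less than $(B/n)\sum_\alpha F_\alpha = B$, so unpacked characters may be assumed to satisfy $f_\alpha \geq B/n$; then, if more than $2kn/B$ characters were unpacked, taking $B/n$ instances of each of $2kn/B$ of them would fill the knapsack at cost at most $B$, contradicting Case~2. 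You argue via \emph{text} frequencies: greedy optimality means that in Case~2 even the cheapest possible fill exceeds $B$, and since that fill costs at most $2kF^{\ast}$, every unpacked character has $F_\alpha \geq F^{\ast} > B/(2k)$, whence $\sum_\alpha F_\alpha \leq n$ caps the number of unpacked characters at $2kn/B$. Both routes give the same bound $r \leq 2kn/B$ and hence the same convolution cost $O(rn\log m)$; yours is arguably cleaner because it needs no re-packing preprocessing, while the paper's isolates a separately useful observation (pattern-sparse characters can always be marked for free). One caveat: your closing claim that the partially packed boundary character "contributes only lower-order work" is as loose as the paper's own "assume there are no such characters" --- its marking cost is not automatically lower order; the clean fix is to exclude that character from the marking set and handle it by one extra convolution, which changes nothing in the bounds.
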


\begin{proof}
In case 1, if we can fill the
knapsack within budget $B$, we apply the marking algorithm. This takes
$\sum_{\alpha \in knapsack}f_\alpha F_\alpha = B$ time and creates just as many
marks. Thus, there will be no more than $B/k$ positions with at least $k$ marks. 
We run Subset $k$-mismatches for these positions and obtain a run time of
$O(n+m+B)$.

In case 2, if we cannot fill the knapsack within the given budget, we apply the
marking algorithm for whatever items we could fit in the knapsack. This takes
 $O(B)$ time. Note that
if we add the costs of including  in the knapsack all the instances of
characters with frequency lower than $B/n$ we get $\sum_{f_\alpha < B/n}f_\alpha
F_\alpha < B/n\sum_\alpha F_\alpha = B$. We can include all of
them in the knapsack by only adding a constant factor to the run time of the marking stage.
Thus, we can assume that the characters not in the knapsack have frequency $f_\alpha \geq B/n$.
There cannot be more than $r=2k/(B/n)$ characters not in the knapsack, otherwise
we could have filled the knapsack within budget $B$ by picking $B/n$ instances
for each of $r$ such characters, for a total of $2k$ positions and a cost
$\sum_{i=1}^{r} B/n F_i \leq B$. Thus, it takes $O(r n \log m)=O(n^2k\log m/B)$ time to compute
convolutions for the characters not in the knapsack. If we make this cost equal to the
cost of the marking phase, $O(B)$, we find $B = n \sqrt{k \log m}$.
As in \cite{ALP04}, if $k < m^{1/3}$  we can employ a different algorithm
which solves the problem in linear time. For larger $k$, $O(\log m) = O(\log k)$
so the run time becomes $O(n \sqrt{k \log k})$.
\end{proof}

We can think of the algorithm in \cite{ALP04} as a special case of our
algorithm where, instead of trying to minimize the cost of the $2k$ items in the
knapsack, we just try to find $2k$ of them for which the cost is less than
$O(n\sqrt{k \log m})$. As a result, it is easy to verify the following:
\begin{theorem}\label{knapsack_less}
{\bf Knapsack $k$-mismatches} spends at most as much time as
the algorithm in \cite{ALP04} to do convolutions and marking.
\end{theorem}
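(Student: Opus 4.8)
The plan is to exhibit both algorithms as instances of the same knapsack-filling framework and then to compare them component by component, exploiting the optimality of the greedy selection used by Knapsack $k$-mismatches. The common framework is this: both algorithms partition $\Sigma$ into a marked set $\Gamma$ and a convolved set $\Sigma\setminus\Gamma$, where $\Gamma$ consists of exactly those characters whose instances are placed in the knapsack of size $2k$ under a common budget $B=n\sqrt{k\log m}$. For any such $\Gamma$ the marking step costs $\sum_{\alpha\in\Gamma}f_\alpha F_\alpha$, which is precisely the total knapsack cost of the selected instances, while the convolution step costs $O(n\log m)$ per character in $\Sigma\setminus\Gamma$. The algorithm of \cite{ALP04} is recovered by letting $\Gamma$ be \emph{any} feasible filling, i.e. any $2k$ instances of total cost at most $B$, whereas Knapsack $k$-mismatches uses the greedy filling that inserts instances in nondecreasing order of $F_\alpha$, that is, the minimum-cost filling.

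The marking comparison is then immediate and is the clean half of the argument. Since the marking time for a selection equals its total knapsack cost, and the greedy selection minimizes this cost over all placements of $2k$ instances, the marking time of Knapsack $k$-mismatches is no larger than that of any feasible selection, in particular the one used by \cite{ALP04}. This is where optimality of the greedy choice does the real work.

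The convolution comparison is where I expect the main obstacle to lie. Both algorithms convolve exactly the characters left outside the knapsack, so the issue reduces to showing that the greedy filling marks at least as many distinct characters as the filling of \cite{ALP04}, hence leaves at most as many to be convolved. I would align the two algorithms through the case split used in the definition of Knapsack $k$-mismatches, keyed on whether the minimum-cost filling fits $2k$ instances within budget $B$. If it does (Case 1), then since that filling is feasible so is the choice of \cite{ALP04}, and neither algorithm performs any convolution at all, so the convolution time is zero for both and the marking bound above already settles the claim. If it does not (Case 2), then no $2k$-instance filling has cost at most $B$, so \cite{ALP04} is forced into its convolution case as well, and because greedy packs the cheapest instances it marks the maximum possible number of instances within $B$, leaving only the most expensive characters to be convolved.

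The delicate point in Case 2 is translating ``fewest unmarked instances'' into ``fewest unmarked characters,'' since a single cheap high-multiplicity character could absorb much of the budget while marking only one character. I would resolve this by invoking the normalization already established in the run-time analysis: every character left outside the knapsack may be assumed to satisfy $f_\alpha\ge B/n$, after folding all lower-frequency characters into the marking step at only a constant-factor overhead, which caps the number of convolved characters at $2kn/B$ uniformly and lets the comparison against \cite{ALP04} go through. Combining the marking bound with this convolution bound yields that Knapsack $k$-mismatches spends at most as much time on marking and on convolutions as the algorithm of \cite{ALP04}, which is the claim.
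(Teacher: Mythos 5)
Your marking half is sound, and it matches the part of the paper's proof where the comparison is clean: whenever the algorithm of \cite{ALP04} actually performs marking (its cases $|\Sigma|\geq 2k$, ``at least $\sqrt{k}$ frequent symbols,'' and ``at least $2k$ infrequent instances''), its selection of $2k$ instances is a feasible filling of cost at most the budget, so the greedy minimum-cost filling of Knapsack $k$-mismatches costs no more. The genuine gap is your framework's premise that \cite{ALP04} is itself a knapsack-filling algorithm --- that it always picks ``any feasible filling'' and convolves exactly the complement. It is not. That algorithm proceeds by a case analysis on $|\Sigma|$ and on the number of frequent symbols, and in two of its cases it performs convolutions with no knapsack-like selection at all: when $|\Sigma|<2\sqrt{k}$ it convolves \emph{every} character and marks nothing, even if cheap feasible fillings exist; and when there are $A<\sqrt{k}$ frequent symbols and the infrequent symbols occupy fewer than $2k$ positions, it convolves the $A$ frequent characters and runs an $O(n\sqrt{g}\log m)$ wild-card algorithm on the remainder. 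Consequently your Case 1 claim (``neither algorithm performs any convolution at all'') is false --- \cite{ALP04} may do $|\Sigma|$ convolutions while Knapsack does none --- and your Case 2 claim (``both algorithms convolve exactly the characters left outside the knapsack'') has no meaning for \cite{ALP04}, which has no knapsack in those cases.

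The theorem still holds in those two cases, but establishing it requires comparisons your framework cannot express: one must argue that Knapsack's marking cost (capped near the budget) plus its residual convolutions is at most the $O(|\Sigma| n\log m)$ that \cite{ALP04} spends when $|\Sigma|<2\sqrt{k}$, and at most the $O(An\log m+n\sqrt{k}\log m)$ it spends in the frequent/infrequent case --- the paper does the latter by observing that all instances of infrequent symbols have total cost below $2n\sqrt{k}$ and hence fit in the knapsack, leaving only the $A$ frequent characters to convolve. This is precisely why the paper's proof follows the case structure of \cite{ALP04} rather than a unified framework: it applies your greedy-optimality observation in the marking cases and makes separate, case-specific cost comparisons in the convolution cases. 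Your normalization $f_\alpha\geq B/n$ only bounds how many characters \emph{Knapsack} convolves; it never ties that count, or Knapsack's marking cost, to what \cite{ALP04} actually spends in its convolution cases, so the comparison is left unestablished exactly where it is hardest.
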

\begin{proof}
In the appendix.
\end{proof}

\commentOut{
\subsection{Faster convolutions}
\label{sec_speedup}
The convolutions used so far take as input binary vectors. A speedup
technique for such cases is given in \cite{FG09}. The time for a convolution
is reduced to $O(n \log^2m / w)$ where $w$ is the size of the computer word, in
bits.
It is useful to look at the run time of various algorithms if faster convolution
algorithms are used. Let $T_C$ be the run time to perform a convolution.
Abrahamson's algorithm runs in time $O(\sqrt{n m T_C})$, the pattern
matching with wild cards algorithm described in section \ref{sec_glogm} runs in
time $O(\sqrt{n g T_C})$ and {\bf Knapsack $k$-mismatches} has a run time of
$O(\sqrt{n k T_C})$.
With the above speedup technique these run times become $O(n \log m \sqrt{m/w})$ for
Abrahamson's algorithm, $O(n \log m \sqrt{g / w})$ for the algorithm in section
\ref{sec_glogm} and $O(n \log m \sqrt{k/w})$ for  {\bf Knapsack $k$-mismatches}
(and since for small $k$ we use a different algorithm, the run time is $O(n \log k \sqrt{k/w})$).
}

\section{Experimental Results}
It is interesting to analyze how some of the above algorithms compare in
practice, since some of them are based on symbol comparison, some on arithmetic
operations, and some on a combination of both. We implemented the following
algorithms:
the naive $O(nm)$ time algorithm, Abrahamson's, Subset $k$-mismatches and
Knapsack $k$-mismatches.
For Subset $k$-mismatches, we simulate
the suffix tree and LCA extensions by a suffix array with an LCP (Longest Common Prefix \cite{KLA+01})
table and data structures to perform RMQ queries (Range
Minimum Queries \cite{BFC00}) on it. This adds a $O(\log n)$ factor to
preprocessing and searching. However, faster implementations are possible. 
For Subset $k$-mismatches, we also tried a simple $O(m^2)$ time
pre-processing using dynamic programming and hashing. Knapsack $k$-mismatches
uses Subset $k$-mismatches as a subroutine, so we have two versions of it also. 
We use all algorithms to solve the
$k$-mismatches problem, even though some are more general.

We tested the algorithms on protein, DNA and English inputs from the Pizza
$\&$ Chili Corpus \cite{PizzaChili}. These inputs were truncated at several thresholds, to analyze how 
run time varies with the length of the text. We randomly selected a substring of
length $m$ from the text and used it as pattern. The algorithms were tested on
an Intel Core i3 machine with 4GB of RAM, Ubuntu 11.10 Operating System and gcc
4.6.1. All convolutions were performed using the fftw \cite{FFTW05} library.

 Figure \ref{fig_runtimes} shows run times for varying $n, m, k$ and $
 |\Sigma|$. The naive algorithm performed well in practice most likely due to
 its simplicity and cache locality.
 Abrahamson's algorithm, for alphabet
 sizes smaller than $\sqrt{m/\log m}$, computes one convolution for every
 character in the alphabet. The convolutions proved to be expensive in practice, so
 Abrahamson's algorithm  was competitive only for large $k$.  
 Subset $k$-mismatches, applied for the full set
 of alignments, performed well for relatively small $k$. In most cases, the
 suffix array version was slower than the one with $O(m^2)$ time pre-processing, 
 because of the added $O(\log n)$ factor when searching in the suffix array. 
 Knapsack $k$-mismatches was the fastest among the algorithms compared because
 on most test instances the knapsack could be filled within the given budget.
 On such instances the algorithm did not perform convolution operations.

\begin{figure*}[t]
\scalebox{.55} {
\begin{tabular}{ccc}
\includegraphics{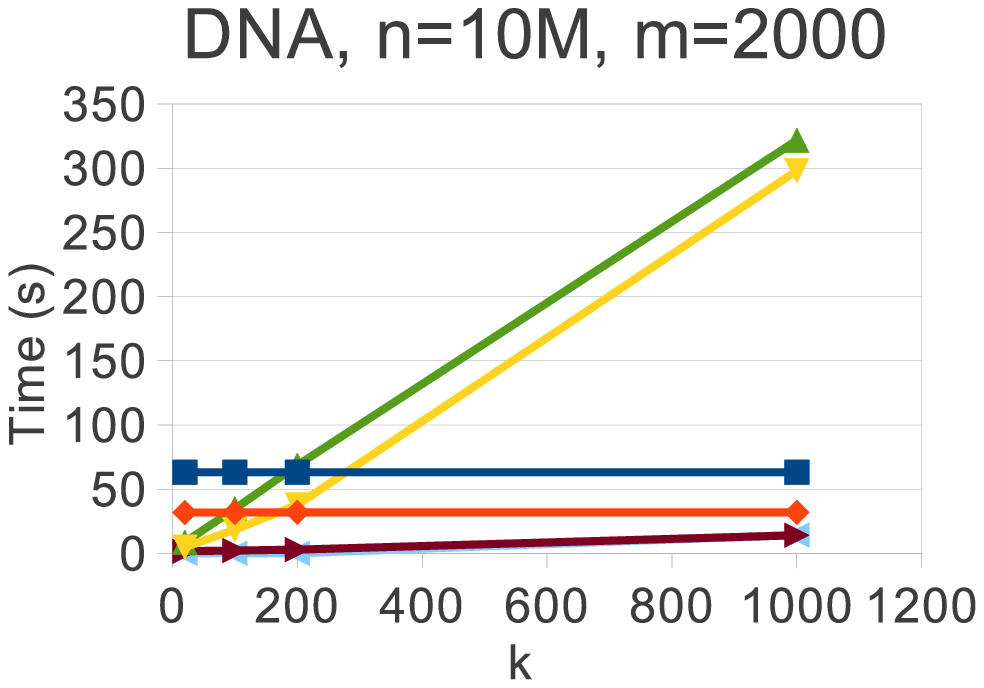} &
\includegraphics{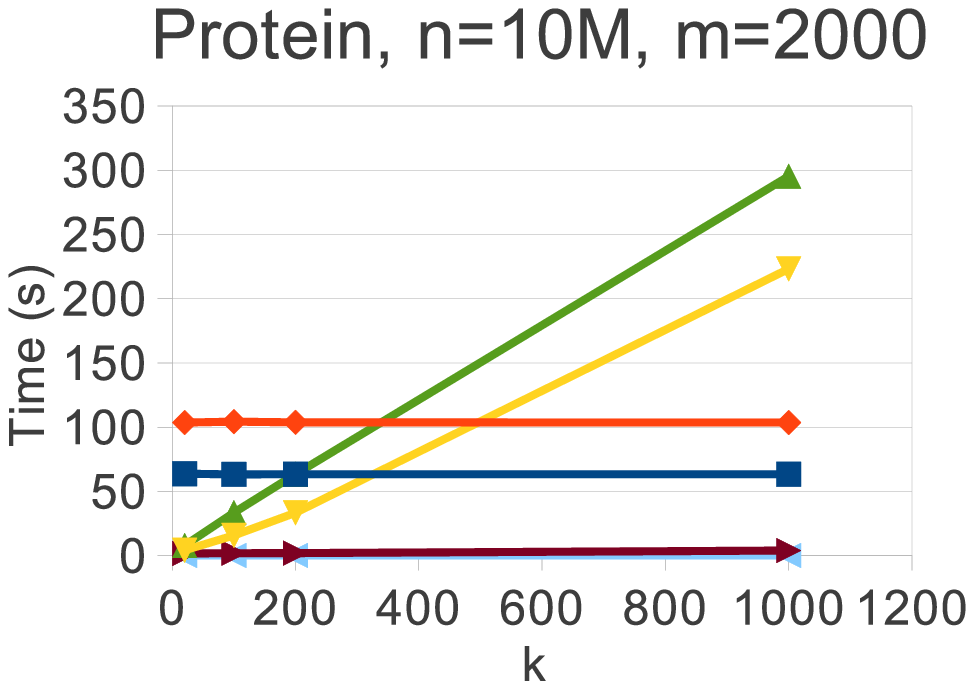} & 
\includegraphics{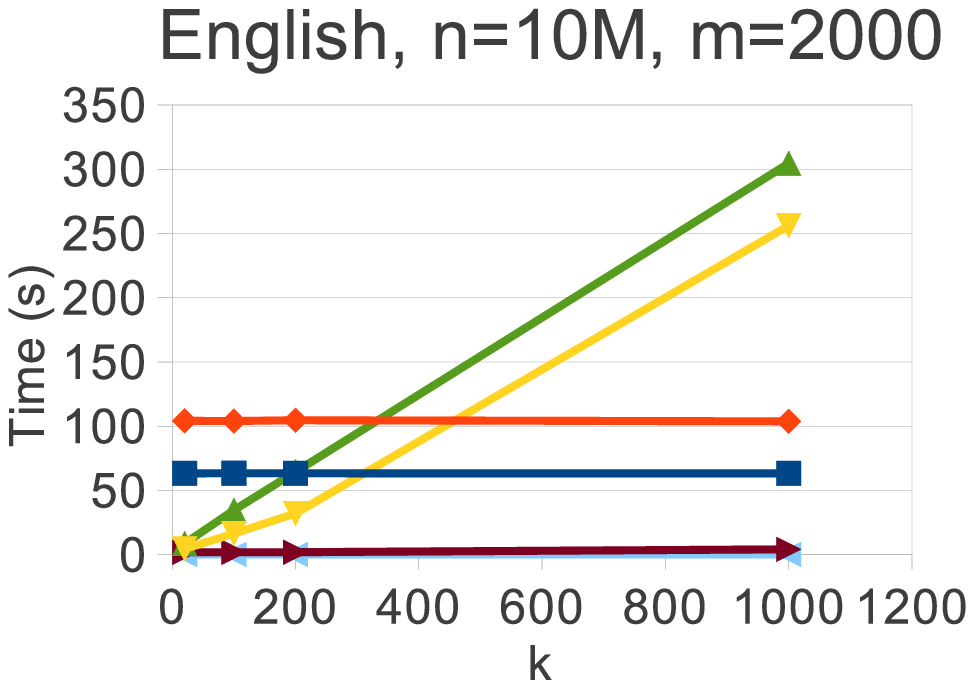} \\
\includegraphics{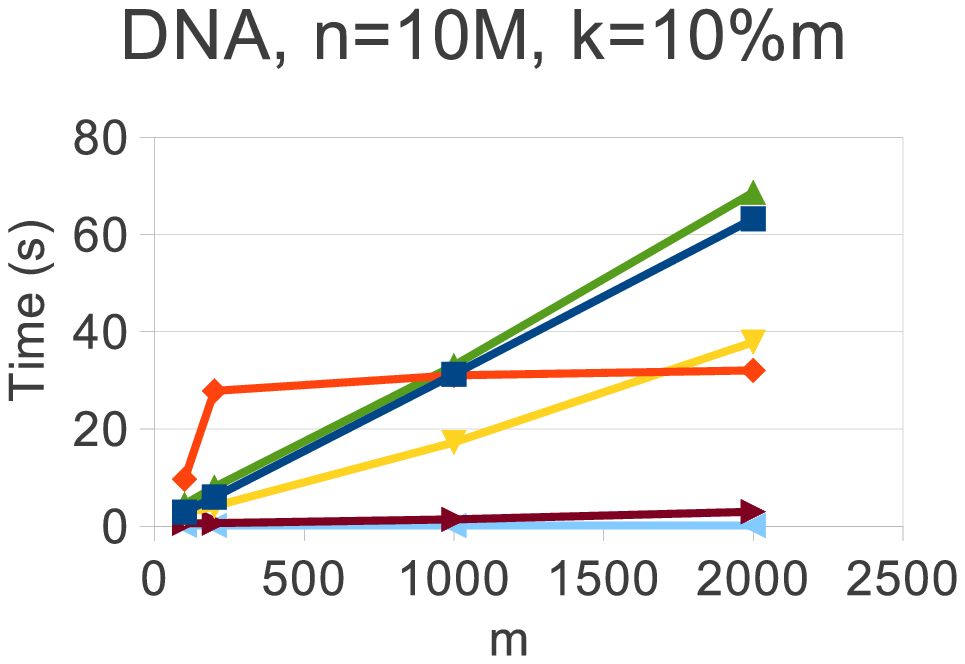} &
\includegraphics{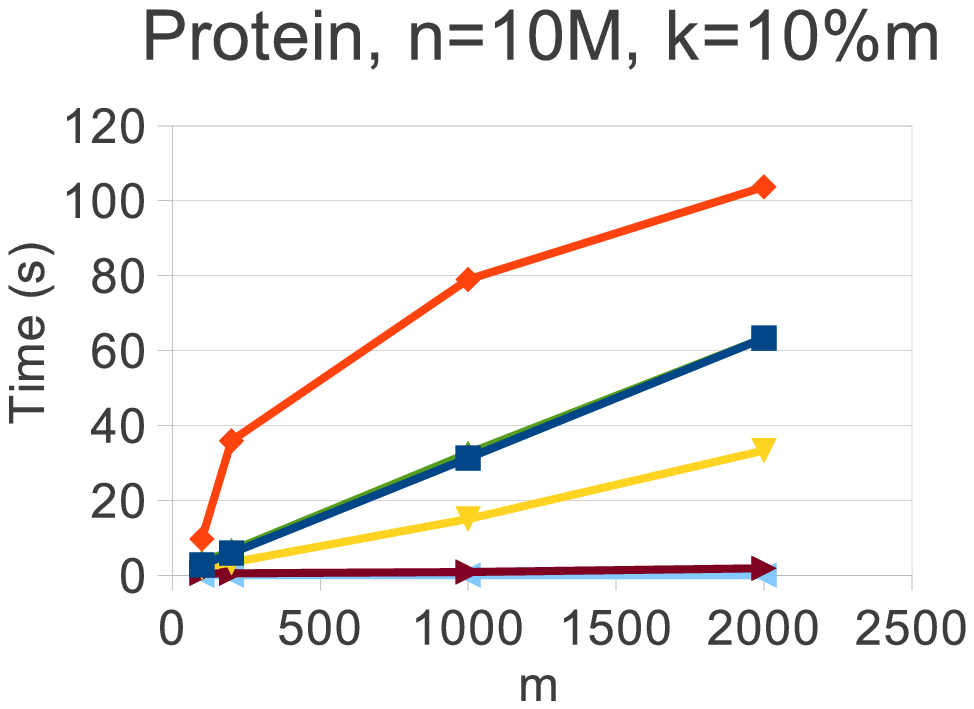} &
\includegraphics{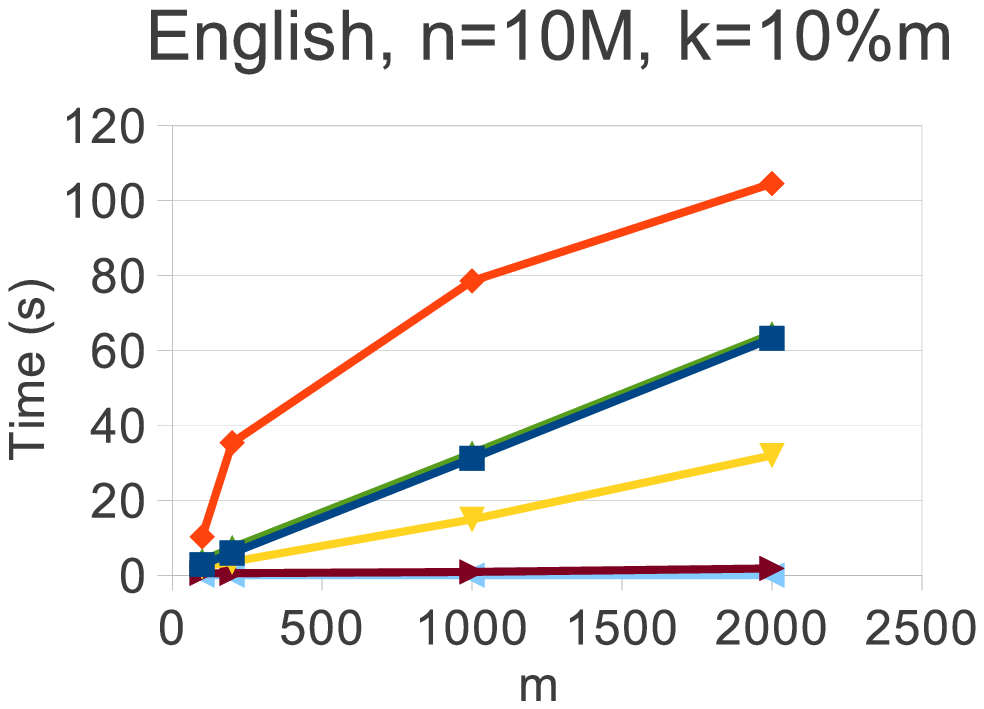} \\
\includegraphics{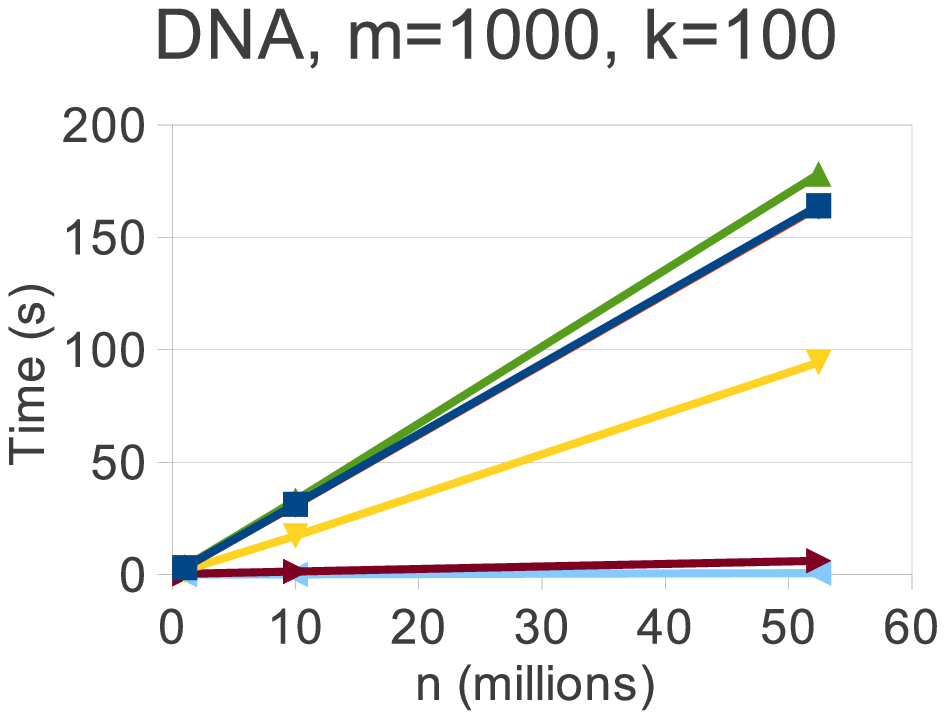} &
\includegraphics{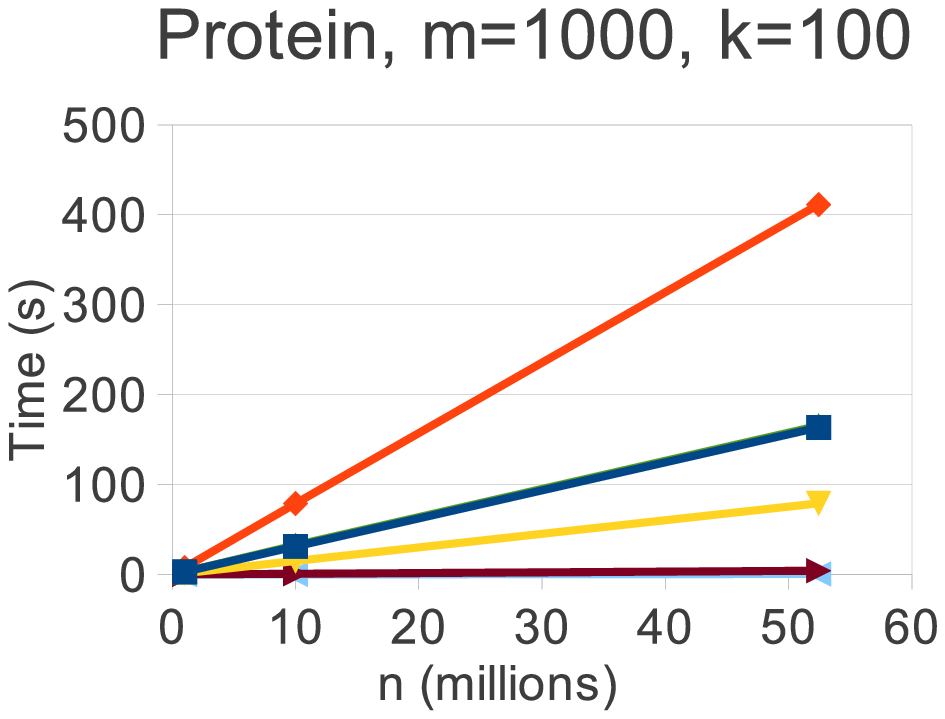} &
\includegraphics{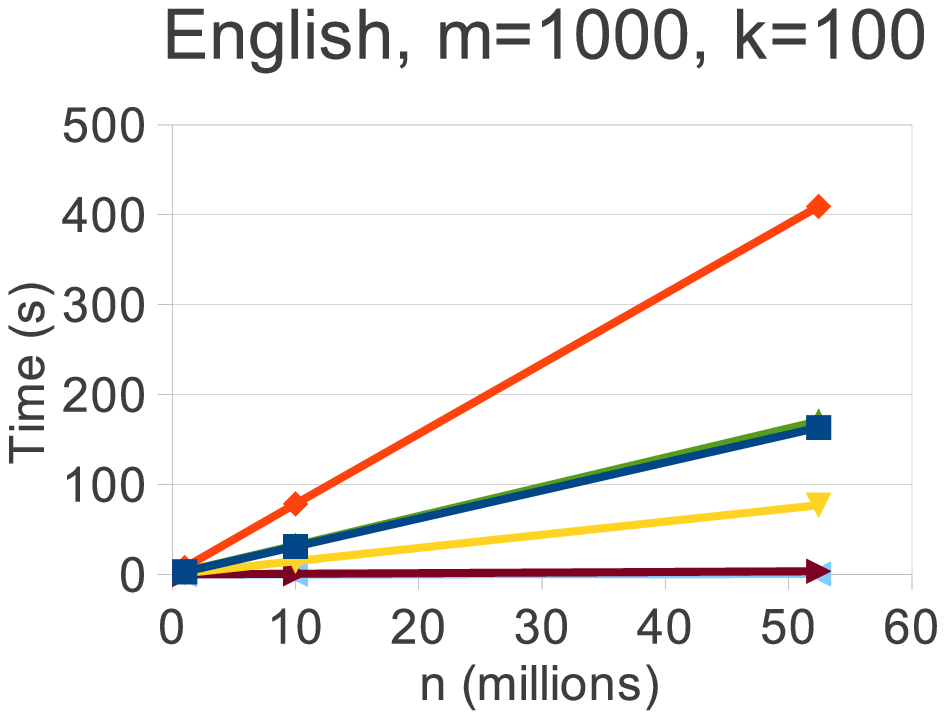} \\
\includegraphics{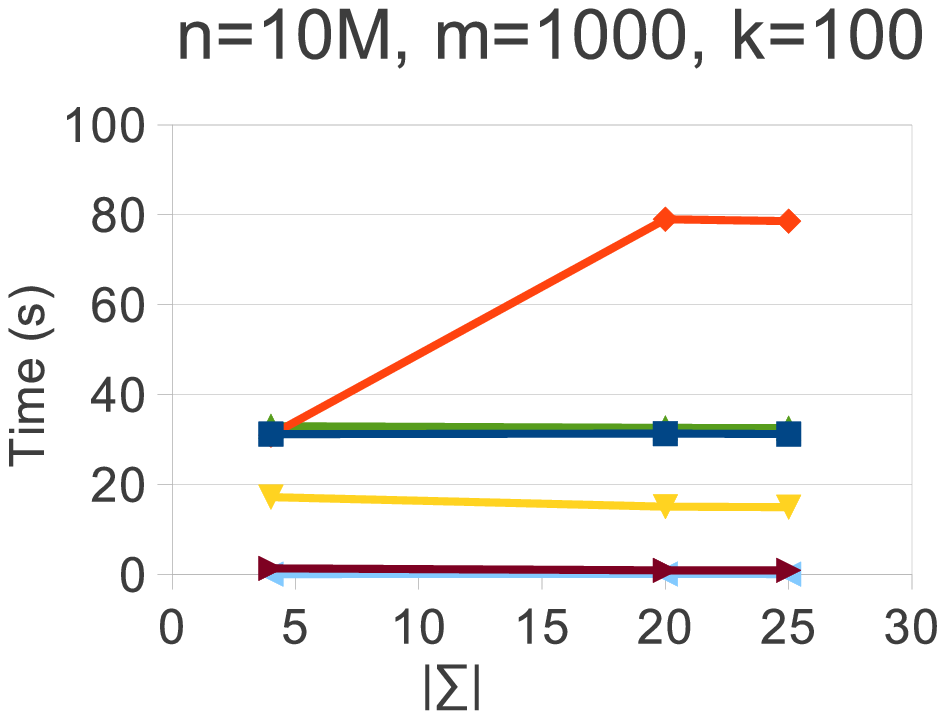} &
\includegraphics{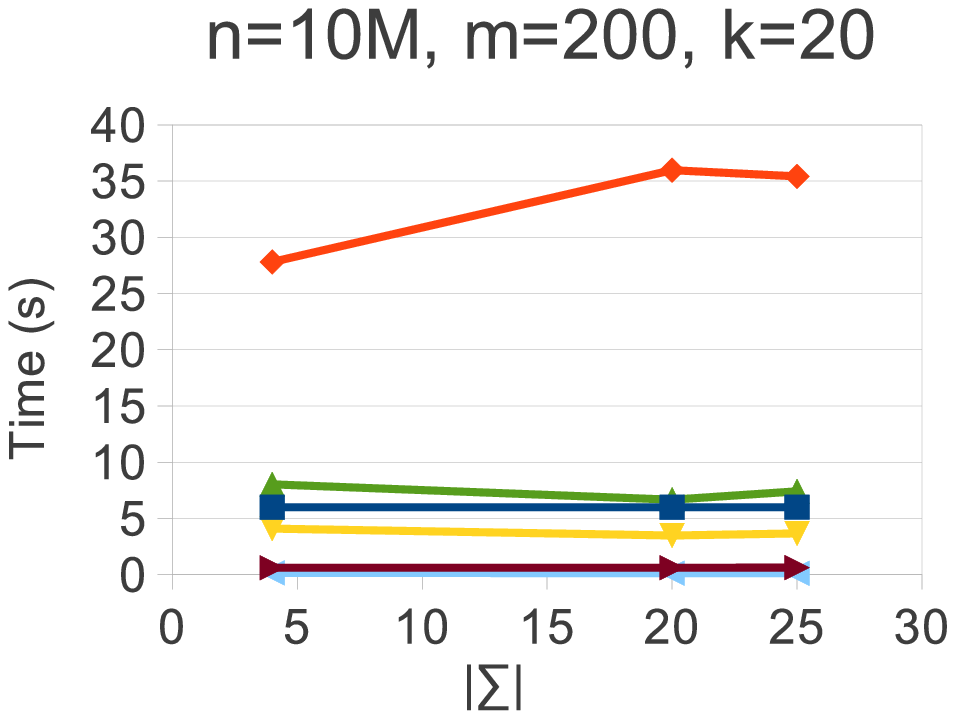} &
\includegraphics{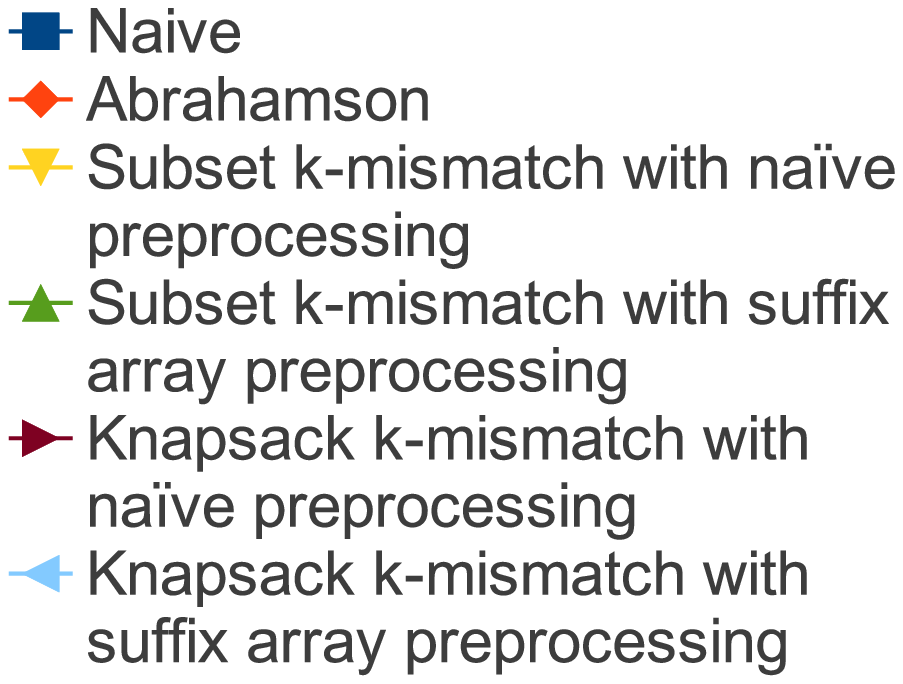} \\
\end{tabular}
}
\caption{Run times for the tested algorithms. M stands for
millions. Top row: k varies. Second row: m varies.
Third row: n varies. Bottom row: alphabet size varies, legend. }
\label{fig_runtimes} 
\end{figure*}

\section{Conclusions}
We have  introduced several randomized and deterministic, exact and approximate
algorithms for pattern matching with mismatches and the $k$-mismatches
problems, with or without wild cards. These algorithms improve the run time, 
simplify, or extend previous algorithms to accommodate wild cards. 
We implemented the deterministic algorithms. An empirical comparison of these 
algorithms showed that those based on
character comparison outperformed those based on convolutions.

\commentOut{
\section{Authors contributions}
SR designed the randomized and approximate algorithms. MN designed
and implemented the deterministic algorithms and carried
out the empirical experiments. MN and SR drafted and approved the manuscript.
}


%
\section*{Acknowledgment}
This work has been supported in part by the following grants: NSF 0829916 and NIH R01LM010101.

\bibliographystyle{IEEEtran}
\bibliography{kmis-references}

\appendices

\section{Algorithm 6 - Subset $k$-mismatches}
\begin{center}
\label{alg_subsetk}
\begin{quote}
\begin{tabbing}
aa \= aa \= aa \= aa \= aa \= \kill
{\bf Algorithm 6}  Subset $k$-mismatches(S)\\
\> // S = set of positions to test\\
\> {\bf let} $M_i := 0$ {\bf for all} $i = 1,n$\\
\> $i:=1$ \\
\> {\bf while} {$i \leq n$} {\bf do} \\
\> \> {\bf find} the largest $l$ such that $\exists j$ for which\\
\>\>\> $T_{i..i+l-1} = P_{j..j+l-1}$ \\
\> \> {\bf for all} {$s \in S$} {\bf where} {$s \leq i < s + m$} {\bf
do}
\\
\> \> \> $M_s = ${\bf updateMism}$(M_s, i-s+1, j, l)$ \\
\> \> \> {\bf if} {$M_s > k$} {\bf then} {$S = S - \{s\}$} \\
\> \> {\bf end for} \\
\> \> $i = i + l + 1$ \\
\> {\bf end while} \\
\> {\bf return} $M$ \\
\\
{\bf function} {\bf updateMism}$(c, s_1, s_2, l)$ \\
\> {\bf while} {$l > 0$ and $c \leq k$}  {\bf do} \\
\> \> {$d := {\bf lca}(s_1,s_2)$} \\
\> \> {\bf if} {$d \geq l$} {\bf then} {\bf return} {$c$}\\
\> \> $c := c + 1$\\
\> \> $d := d + 1$\\
\> \> $s_1 := s_1 + d$\\
\> \> $s_2 := s_2 + d$\\ 
\> \> $l := l - d$\\
\> {\bf end while} \\
\> {\bf return} $c$ \\
\end{tabbing}
\end{quote}
\end{center}

\vspace{0.1in}
\section{Algorithm 7 - Knapsack $k$-mismatches}
\begin{center}
\label{alg_knapsack}
\begin{quote}
\begin{tabbing}
aa \= aa \= aa \= aa \= aa \= \kill
{\bf Algorithm 7}  Knapsack $k$-mismatches \\
\> {\bf compute} $F_i$ and $f_i$ for every $i \in \Sigma$ \\
\> {\bf sort} $\Sigma$ with respect to $F_i$ \\
\> $s := 0$\\
\> $c := 0$\\
\> $i := 1$\\
\> $B := n \sqrt{k \log k}$ \\
\> \bf{while} {$s < 2k$ {\bf and} $c < B$} {\bf do} \\
\> \> $t := \min(f_i, 2k - s)$ \\
\> \> $s := s + t$\\
\> \> $c := c + t * F_i$ \\
\> \> $i := i + 1$ \\
\> {\bf end while} \\
\> {$\Gamma := \Sigma[1..i]$} \\
\> {$M := {\bf Mark}(T, n, \Gamma)$} \\
\> {\bf if} {$s = 2k$} {\bf then} \\
\> \> $S := \{i | M_i \geq k\}$  \\
\> \> {\bf return Subset $k$-mismatches}($S$)\\
\> {\bf else} \\
\> \> {\bf for}	{$\alpha \in \Sigma - \Gamma$} {\bf do} \\
\> \> \> $C := {\bf convolution}(T^{\alpha}, P^{\alpha})$ \\
\> \> \> {\bf for} {$i:=1$ {\bf to} $n$} {\bf do} \\
\> \> \> \> {$M_i = M_i + C_i$} \\
\> \> {\bf end for} \\
\> \> $S = \{i | M_i \geq m-k\}$\\	  
\> \> {\bf return} S\\
\> {\bf end if} \\
\end{tabbing}
\end{quote}
\end{center}

\section{Proof of Theorem \ref{knapsack_less}}
Theorem \ref{knapsack_less} states that Knapsack $k$-mismatches
will spend at most as much time as the algorithm in \cite{ALP04} to do
convolutions and marking.

\begin{proof}
{\bf Observation:} In all the cases presented bellow,
Knapsack $k$-mismatches can have a run time as low as $O(n)$, for example if
there exists one character $\alpha$ with $f_\alpha = O(k)$ and $F_\alpha = O(n/k)$.

{\bf Case 1:} $|\Sigma| \geq 2k$. The algorithm in \cite{ALP04} chooses 2k
instances of distinct characters to perform marking. This ensures that the cost
$M$ of the marking phase is less or equal to $n$ and the number of
remaining positions after filtering is no more than $M/k$. Our algorithm puts in the
knapsack 2k instances of not necessarily different characters such that the cost
$B$ of the marking phase is minimized. Clearly $B \leq M$ and the number of
remaining positions after filtering is less or equal to $B/k \leq M/k$.

{\bf Case 2:} $|\Sigma| < 2\sqrt{k}$. The algorithm in \cite{ALP04} performs
one convolution per character to count the total number of matches for every
alignment, for a run time of $O(|\Sigma|n\log m)$. Only in the worst case, 
Knapsack $k$-mismatches cannot fill the knapsack at a cost $B < |\Sigma|n\log
m$ so it defaults to the same run time. However, in the best case, the
knapsack can be filled at a cost $B$ as low as $O(n)$ and so the run time could be linear.

{\bf Case 3:} $2\sqrt{k} \leq |\Sigma| \leq 2k$. A symbol that appears in the
pattern at least $2\sqrt{k}$ times is called frequent.

{\bf Case 3.1:} There are at least $\sqrt{k}$ frequent symbols. The algorithm in
\cite{ALP04} chooses $2\sqrt{k}$ instances of $\sqrt{k}$ frequent symbols to do
marking and filtering at a cost $M \leq 2n\sqrt{k}$. Since Knapsack
$k$-mismatches will minimize the time $B$ of the marking phase we have $B
\leq M$ so in the worst case the run time is the same as for \cite{ALP04}.

{\bf Case 3.2:} There are $A < \sqrt{k}$ frequent symbols. The
algorithm in \cite{ALP04} first performs one convolution for each frequent
character for a run time of $O(A n \log m)$.
Two cases remain:

{\bf Case 3.2.1:} All the instances of the non-frequent symbols number less than
$2k$ positions. The algorithm in \cite{ALP04} replaces all instances of frequent
characters with wild cards and applies a $O(n\sqrt{g} \log{m})$ algorithm to
count mismatches, where $g$ is the number of non-wild card positions. Since
$g<2k$ the run time for this stage is $O(n\sqrt{k} \log{m})$ and the total
run time is $O(An\log m + n\sqrt{k} \log{m})$.
Knapsack $k$-mismatches can always include in the knapsack all the
instances of non frequent symbols since their total cost is no more than
$O(n\sqrt{k})$ and in the worst case do convolutions for the remaining
characters at a total run time of $O(An\log m + n\sqrt{k})$. In practice,
the knapsack will be filled using some instances of both frequent and
infrequent characters, whichever minimize the cost.

{\bf Case 3.2.2:} All the instances of the non-frequent symbols number at
least $2k$ positions. The algorithm in \cite{ALP04} chooses $2k$ instances of
infrequent characters to do marking. Since each character has frequency less
than $2\sqrt{k}$, the time for marking is $M < 2n\sqrt{k}$ and there are no more
than $M/k$ positions left after filtering. Knapsack
$k$-mismatches chooses characters in order to minimize the time $B$
for marking, so $B \leq M$ and there are no more than $B/k \leq M/k$ positions
left after filtering.
\end{proof}

\ifCLASSOPTIONcaptionsoff
  \newpage
\fi

\end{document}